\newcommand{\mbb}{\mathbb}
\newcommand{\mbf}{\mathbf}
\newcommand{\mcl}{\mathcal}
\newcommand{\bs}{\boldsymbol}
\newcommand{\f}{\frac}
\newcommand{\T}{\textnormal}
\newcommand{\x}{\mathbf{x}}
\newcommand{\X}{\bm{\mathcal{X}}}
\newcommand{\D}{\bm{\mathcal{D}}}
\newcommand{\upsilona}[2]{\upsilon_{#1,#2}}
\newcommand{\argmax}{\operatornamewithlimits{arg\,max}}
\setlist[enumerate]{leftmargin=.5in}
\setlist[itemize]{leftmargin=.5in}
\newcommand{\defeq}{:=}
\newcommand{\PI}{\textrm{PI}}
\newcommand{\EI}{\textrm{EI}}
\newcommand{\UCB}{\textrm{UCB}}
\crefname{hypothesis}{Hypothesis}{Hypotheses}
\title{Lookahead Acquisition Functions for Finite-Horizon Time-Dependent \\ Bayesian Optimization and Application to Quantum Optimal Control\thanks{\funding{U.S.\ Department of Energy, Office of Science, Advanced Scientific Computing Research, 
QAT4ChemSci, FAR-QC, and AIDE-QC projects
under Contract DE-AC02-06CH11357}}}
\author{S. Ashwin Renganathan{\thanks{Mathematics \& Computer Science Division, Argonne National Laboratory, Lemont, IL 
  (\email{srenganathan@anl.gov}, \email{jmlarson@anl.gov}, \email{wild@anl.gov}).}}
  \and
  Jeffrey Larson\footnotemark[2]
  \and 
  Stefan M.\ Wild\footnotemark[2]
  }
\begin{document}

\maketitle

\begin{abstract}
We propose a novel Bayesian method to solve the maximization of a
time-dependent expensive-to-evaluate stochastic oracle. We are interested in the decision
that maximizes the oracle at a finite time horizon, given a limited budget of
noisy
evaluations of the oracle that can be performed before the horizon. Our recursive two-step 
lookahead acquisition function for Bayesian optimization makes
nonmyopic decisions at every stage by maximizing the expected utility at the specified time horizon. Specifically, we propose a generalized two-step lookahead framework with a customizable \emph{value} function that allows users to define the utility. We illustrate how lookahead versions of classic acquisition functions such as the expected improvement, probability of improvement, and upper confidence bound can be obtained with this framework. We demonstrate the utility of our proposed approach on several carefully constructed synthetic cases and a real-world quantum optimal control problem.
\end{abstract}

\begin{keywords}
  Bayesian optimization, Gaussian processes, lookahead methods, quantum optimal control
\end{keywords}

\begin{AMS}
  	62L02, 	62L04, 60G02, 60G08, 60G15, 49M02
\end{AMS}

\section{Introduction}
\label{s:Introduction}
We consider the maximization of an expensive-to-evaluate oracle $f(\x, t)$,
where $\x\in \X \subset \mbb{R}^d$ are the inputs or \emph{action} parameters
in a compact domain and $t$ is the \emph{context} from a (possibly infinite)
set of contexts $\mcl{T}$. An observation $\hat{y} \in \mbb{R}$ made by observing
$f$ at the action-context pair $(\x, t)$ is assumed to be corrupted by
additive stochastic noise $\epsilon$; that is, $\hat{y} = f(\x,t) + \epsilon$.
Furthermore, we assume a context space $\mcl{T}$ whose members have a unique
ordering (hereafter ``time''). We seek an action $\x^*_T$ at a given future time $T \in \mcl{T}$ that solves
\begin{equation}
\argmax_{\x \in \X} f(\x, T)
\label{eq:prob}    
\end{equation}
in as few stochastic observations of $f$ as possible.
Such a problem fundamentally differs from general
context-dependent bandit problems \cite{krause2011contextual,srinivas2009gaussian} in that we are interested solely in
the action that maximizes the oracle value $f$ 
at time $t=T$, as opposed to maximizing the
\emph{cumulative oracle value} until $T$. Additionally, we
assume that only one observation can be made per time 
$t$.
The time-dependent maximization that we address arises, for example, in quantum
computing and control~\cite{magann2020digital,zhu2018training}, where 
values for the control parameter $\x$ are sought by using
noisy observations $f(\x,t)$ 
to achieve a
desired task at future time $T$; see \cref{ss:qoc}. Another application  relevant to our work
is multistage stochastic programming for portfolio optimization \cite[Ch. 1]{shapiro2014lectures}, where the goal is to find a 
policy $\{(\x_i, t_i) :~i=0,\ldots,n-1\}$ for redistributing assets 
to maximize the expected profit at $T=t_n$ and the profit at each stage is
realized by querying a noisy oracle. 
In many settings, $\X$ is a simple domain; here 
we assume that $\X\defeq [0,1]^d$ so that the inputs $\x$ are normalized.

The main challenges of the problem \eqref{eq:prob} that we seek to address are (i) the absence of structure (e.g., convexity) in $f$ that
could potentially be exploited, 
(ii) the high cost
associated with each
observation of $f$, and 
(iii) the time-varying setting where observations of $f$ cannot be made in the past or future.
Bayesian optimization (BO)~\cite{brochu2010tutorial, shahriari2015taking} with
Gaussian process (GP) priors~\cite{gramacy2020surrogates,rasmussen:williams:2006}  suits the
specific
challenges posed by our problem (mainly (i) and (ii) mentioned
above),
where observations at each round $i$, 
$\hat{y}_i =
f(\x_i, t_i) + \epsilon_i$, 
are made judiciously by leveraging information
gained from previous observations, as a means of coping with the high costs of
each observation. The key idea  is to specify GP prior distributions on the
oracle value and the noise. That is, $f(\x,t) \sim \mcl{GP}\left(0, k( (\x, t),
(\x',t'))\right)$ and $\epsilon \sim \mcl{GP}(0,\sigma^2_\epsilon)$, where $\sigma^2_\epsilon$ is a constant noise variance. 
The covariance function $k$ captures the correlation between the observations
in the joint $(\x,t)$ space; here we use the product composite form
given by $k((\x, t), (\x', t')) = k_{\x}(\x,\x'; \theta_\x) \times k_t(t,t';
\theta_t) $, where $\theta_\x$ and $\theta_t$ 
parameterize the covariance functions for $\x$ and $t$, respectively. We estimate the
GP hyperparameters 
$\bm{\Omega}= \lbrace \theta_x, \theta_t, \sigma^2_\epsilon \rbrace$
from data by maximizing the marginal
likelihood. We use the squared-exponential  kernels defined as $k_\x(\x, \x') = \T{exp} \left( -\f{\|\x -
\x'\|^2}{2 \theta^2_\x}\right)$ and $k_t(t, t') = \T{exp} \left( -\f{\|t -
t'\|^2}{2 \theta^2_t}\right)$.
The
posterior predictive distribution of the output $Y$, conditioned on available observations from the oracle, is given by 
\begin{equation}
    \begin{split}
        Y(\x, t |
      \D_n, \bm{\Omega}) &\sim \mcl{GP}(\mu_n(\x, t), \sigma^2_n (\x, t)),~ \quad \D_n \defeq  \lbrace
(\x_i, t_i), \hat{y}_i \rbrace _{i=1}^n \\
\mu_n(\x,t) &= \mbf{k}_n^\top [\mbf{K}_n + \sigma_\epsilon^2 \mbf{I}]^{-1} \mbf{y}_n\\
\sigma^2_n(\x, t) &=  k((\x,t), (\x,t)) - \mbf{k}_n^\top [\mbf{K}_n + \sigma_\epsilon^2 \mbf{I}]^{-1} \mbf{k}_n,
    \end{split}
    \label{e:GP}
\end{equation}
where $\mbf{k}_n$ is a vector of covariances between $(\x,t)$ and all observed
points in $\D_n$, $\mbf{K}_n$ is a sample covariance matrix of observed points in
$\D_n$, $\mbf{I}$ is the identity matrix, and $\mbf{y}_n$ is the vector of all
observations in $\D_n$; \eqref{e:GP} is then used as a surrogate model for
$f$ in Bayesian optimization. Note that $\mu_n$ and $\sigma_n^2$
are the posterior mean and variance of the GP, respectively, where the
subscript $n$ implies the conditioning based on $n$ past observations. BO then
proceeds by defining an \emph{acquisition function}, $\alpha(\x,t)$, in terms of the GP posterior that is
optimized in lieu of the expensive $f$ to select the next
point $\x_{n+1}$ at $t_{n+1}$, and the process continues until a budget of $q$ observations is reached; see Algorithm~\ref{a:BO}. We denote by $\x_i$ and $\x_t$ a decision made by maximizing the acquisition function at $t_i$ and $t$, respectively. We use $\x_t^*$ to denote a maximizer of $f(\cdot, t)$.

\begin{algorithm2e}[t]
\textbf{Given:} $\D_n = \lbrace
(\x_i, t_i), \hat{y}_i \rbrace _{i=1}^n$, 
 total budget $q$, 
 schedule $\lbrace t_{n+1},\ldots, t_{q}= T \rbrace$, 
 and GP hyperparameters $\bm{\Omega}$ \\
\KwResult{$(\x_T, \hat{y}_T)$}
  \For{$i=n+1, \ldots, q$, }{
  Find $\x_i \in \underset{\x \in \X}{\argmax}~ \alpha(\x, t_i)$ \qquad (acquisition function maximization based on $\bm{\Omega}$)\\
    Observe $\hat{y}_i$ = $f(\x_i, t_i) + \epsilon_i$\\ 
    Append $\D_i = \D_{i-1} \cup \lbrace (\x_i, t_i), \hat{y}_i \rbrace$\\ 
    Update GP hyperparameters $\bm{\Omega}$ \\
 }
 \caption{Generic Bayesian optimization}
 \label{a:BO}
\end{algorithm2e}

Typical acquisition functions in BO take a \emph{greedy} (``myopic'')
approach, where each $\x_i$ 
is selected as though it were the
last, without accounting for the potential impact on future selections.
Such myopic acquisition functions include the
probability of improvement (\texttt{PI})~\cite{kushner1964new}, the expected improvement
(\texttt{EI}) \cite{jones1998efficient, mockus1978application}, and the GP upper
confidence bound (\texttt{UCB})~\cite{srinivas2009gaussian}, which are, respectively,
\begin{equation}
    \begin{split}
        \alpha_{\PI}(\x, t) &= \Phi \left( z \right) \\
        \alpha_{\EI}(\x, t) &= z\sigma_n(\x, t) \Phi\left(z \right) + \sigma_n(\x, t) \phi \left( z \right) \\
        \alpha_{\UCB}(\x,t) &= \mu_n(\x, t) + \beta^{1/2} \sigma_n(\x, t),
    \end{split}
    \label{e:myopic_acqs}
\end{equation}
where $z = \f{\mu_n(\x, t) - \xi}{\sigma_n(\x, t)}$; $\Phi$ and $\phi$ denote
the standard normal cumulative and probability density functions, respectively;
$\xi$ is a user-specified target; and $\beta$ is a confidence parameter.
Whereas the \texttt{PI} and \texttt{EI} acquisition functions are probabilistic measures of
improvement over a user-specified target, \texttt{UCB} is an optimistic
estimate. 
Under suitable regularity conditions, the \texttt{PI}, \texttt{EI}, and
\texttt{UCB} acquisition functions are guaranteed to achieve asymptotic
consistency~\cite{bull2011convergence, mockus1978application,
srinivas2009gaussian, vazquez2010convergence} in BO. For a limited budget $q$, however, 
such acquisition functions can be suboptimal~\cite{gonzalez2016glasses}.
Furthermore, the myopic nature of these basic acquisition functions is such that they seek to maximize the 
oracle value at $t=T$ only when they arrive at $T$, meaning that the past ($t<T$) decisions 
are not necessarily made to facilitate making the best decision for 
$t=T$. On the other hand, we are interested in a strategy that seeks to
maximize $f(\x,T)$. 
Since the time horizon $t=T<\infty$, this is a \emph{finite-horizon} problem.

For time-dependent objectives, an optimal decision may be horizon-dependent,
which poses a challenge for improvement-based acquisition functions such as \texttt{PI} and \texttt{EI}. Because the target $\xi$ is typically chosen as the best observed value in $\{\hat{y}_i\}_{i=1}^n$, specifying an appropriate target when the maximum oracle value $\max f(\x,t)$
changes with time $t$ is challenging. For example, what if the maximum oracle value at $T$ is
much smaller than anything observed until current time $t$? A further challenge in our context is that observations at the target time $T$ are not realizable until we conclude the optimization. Therefore, in such situations, myopic 
acquisition functions that seek
the best oracle value at the current $t$ may not necessarily be  useful. Instead,
we are interested in acquisition functions that  optimize the
\emph{long-sighted} decision (at $t=T$) even if it means incurring a suboptimal
oracle value at intermediate $t<T$. 
In the BO literature, strategies that make decisions by accounting for the
decisions remaining in the future are generally referred to as \emph{lookahead}
approaches, which we adopt to solve our  finite-horizon problem. 



We now illustrate the benefit of a lookahead acquisition function in the
context of time-dependent BO. \Cref{fig:EI_v_2LEY} shows the difference
between the decisions made via EI and the lookahead acquisition function
\texttt{2LEY} introduced in \cref{s:mLBO}. The
top left of the figure shows contours of the true (noise-free) oracle with a target horizon of 
$T=3.0$. We assume we have made $n$ decisions before $t_{n+1}=2.625$, and we
are interested in making a decision at current $t_{n+1}$ (denoted by the horizontal
dashed line). The top right shows the $\texttt{2LEY}$ and $\texttt{EI}$
acquisition functions and the corresponding decisions made by maximizing each
of them (denoted by vertical dashed lines). In the bottom row, we show the GP
predictions at $T$, with each of the decisions made via the myopic and
lookahead acquisition functions incorporated, along with the noise-free $f(\x,
T)$ for reference, and the final decisions at $T$ (denoted by vertical dashed
lines). Notice that the lookahead decision results in a more accurate
prediction of the global maximum at $T$ compared with the myopic decision. This illustration shows that 
lookahead acquisition functions have the potential to make better decisions
than do myopic acquisition functions when the oracle value at a target (future) horizon is of interest. One could ask whether a randomly made decision that offers a \emph{space filling} effect (e.g., $0\leq \x \leq 0.4$ at $t=2.625$ ) could better learn $f(\x, T)$ and hence lead to a better final decision at $T$. We show that this is not the case, with a randomly chosen decision $(0.2474, 2.625)$ that does not outperform the lookahead method; see bottom right of \Cref{fig:EI_v_2LEY}.


\begin{figure}[htbp!]
    \centering
    \begin{subfigure}{1\textwidth}
        \centering
        \includegraphics[width=1\linewidth]{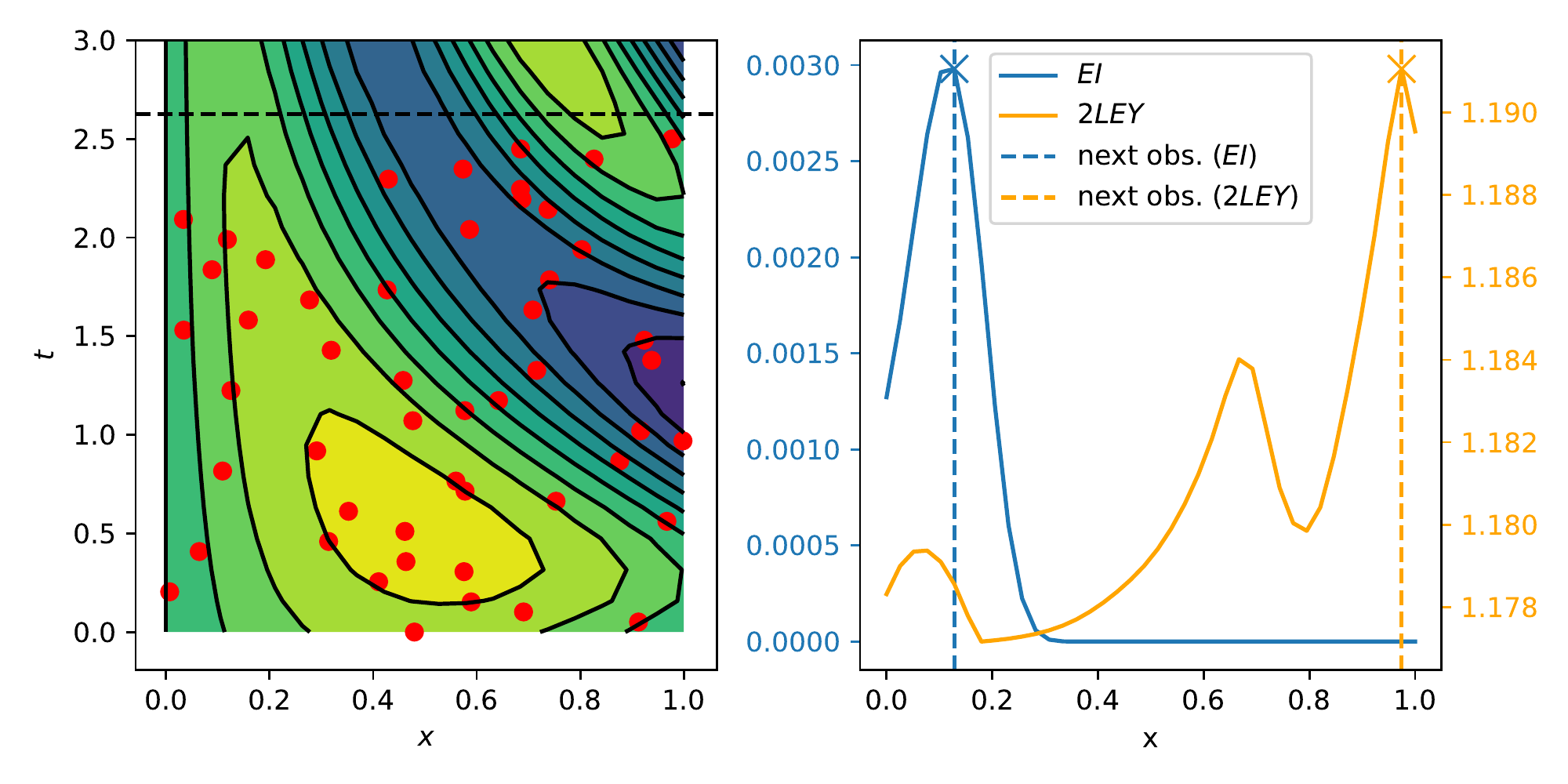} 
        \caption{Left: contour of $f(\x, t)$ in $\X\times \mcl{T}$ with $T=3$. Red circles denote the locations of $n$ previous observations. Horizontal line denotes the time $t_{n+1}=2.625$ at which the next observation is to be made. Right: myopic (\texttt{EI}) and lookahead (\texttt{2LEY}) acquisition functions. }
    \end{subfigure}\\
    \begin{subfigure}{1\textwidth}
        \centering
        \includegraphics[width=1\linewidth]{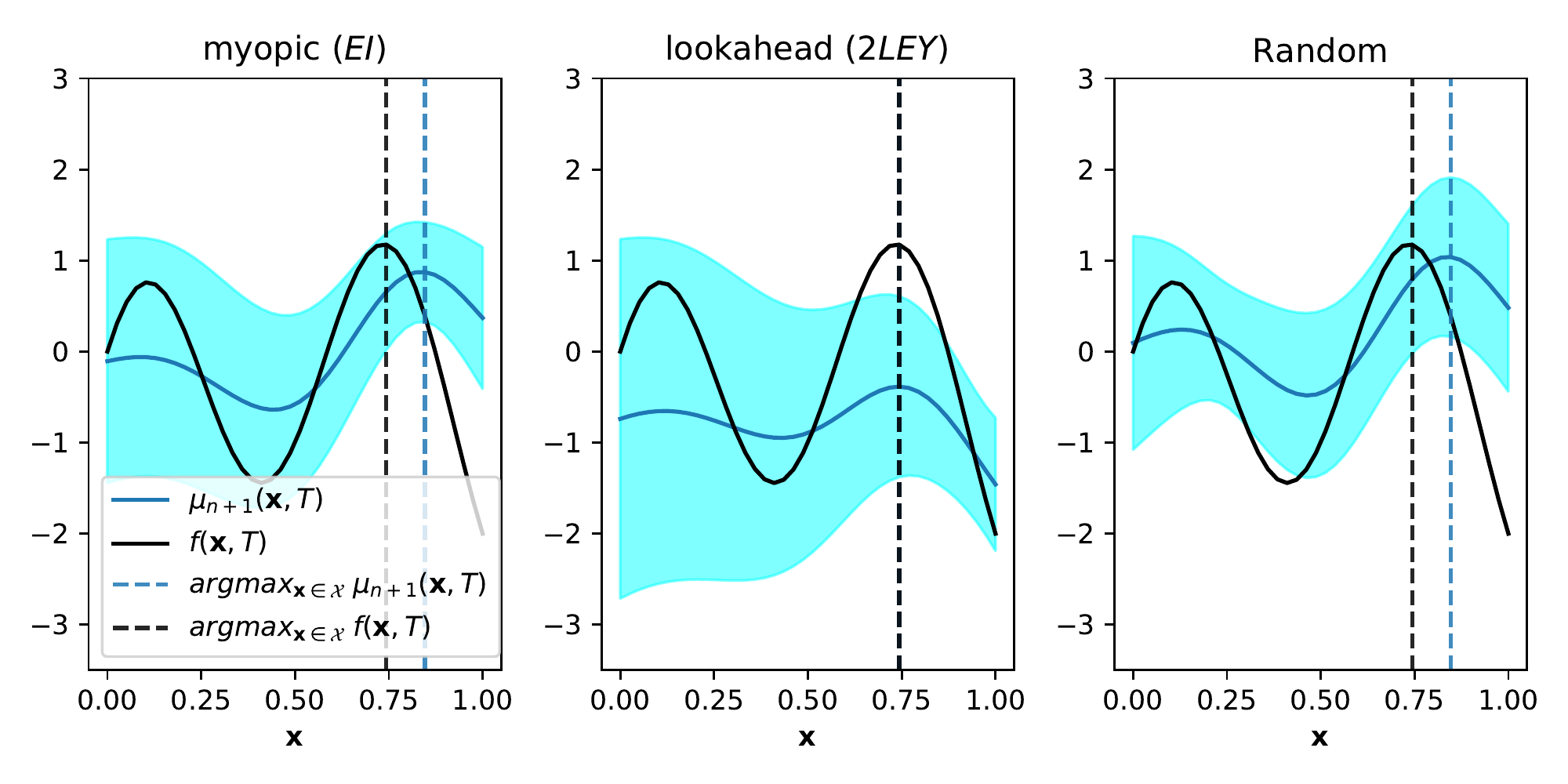}   
        \caption{GP prediction at $T$ with decisions incorporated from myopic (\texttt{EI}) and lookahead (\texttt{2LEY}) decisions as well as a decision chosen uniformly at random from $\bs{\mcl{X}}$ at $t$. Shaded region represents 95\% confidence region.}
    \end{subfigure}
    \caption{Comparison of the decisions made via a myopic (\texttt{EI}) 
    and a lookahead (\texttt{2LEY}) acquisition function. The top right shows the acquisition functions and the respective decisions (maximizers). The GP predictions, incorporating the decision made via either of the acquisition functions, are shown in the bottom. One can see that if the final decision at $T$ is made as   $\argmax_{\x\in\X}~\mu_{n+1}(\x, T)$, then the lookahead approach predicts the maximizer at $T$ more accurately than does the myopic approach. A random decision $(0.2474, 2.625)$, even though it is \emph{space filling}, does not outperform the lookahead decision.}
    \label{fig:EI_v_2LEY}
\end{figure}

In foundational work on lookahead BO, Osborne \cite{osborne2010bayesian}
showed that proper Bayesian reasoning can be used to define an acquisition
function where the ($n+1$)th observation is made by marginalizing a loss function
at the final ($q$th) observation  over all the remaining observations $(\x_i, \hat{y}_i)$,
$i=n+2,\ldots, q$. Others have proposed
\emph{lookahead EI} in the context of finite-budget BO. For example, Ginsbourger and Le
Riche \cite{ginsbourger2010towards} showed that the lookahead EI is a dynamic
program that chooses the EI maximizer in expectation, considering all possible
strategies of the same budget. 
A practical challenge with lookahead strategies is the computational
tractability (as demonstrated in, for example, \cite{ginsbourger2010towards,osborne2010bayesian}), which can be partially mitigated by various
approximation methods \cite{gonzalez2016glasses,lam2016bayesian,wu2019practical}. Yue and Kontar \cite{yue2020non} explored the
theoretical properties of the approximate dynamic programming approach for
multistep lookahead acquisition functions and discussed when these strategies are guaranteed
to perform better than their myopic counterparts. In this work
we draw inspiration from \cite{osborne2010bayesian} to address the
limited budget aspect of our problem. However, we propose a general framework
for lookahead BO that readily applies to the finite-horizon,
time-dependent BO; we are unaware of any previous work on this topic.

In the context of
time-dependent BO, the only works we are aware of are
\cite{nyikosabayesian} and \cite{bogunovic2016time}. 
We fundamentally differ
from their approach in the following ways: (i) their goal is to track a
time-varying optimum whereas ours is to predict it at the target horizon $T$), and (ii) they do not address
the limited budget constraint as we do. Furthermore, setting $k_t(t_i,
t_j)=(1-\epsilon)^{|t_i - t_j|/2}$ (where $\epsilon$ is the \emph{forgetting}
factor as defined in \cite{bogunovic2016time}), our method is equivalent to
\cite{bogunovic2016time} if we myopically make decisions at each $t$ via \texttt{UCB}.


Another plausible way to solve the finite-horizon (target $t=T$) BO problem
is to
sequentially choose points that are most
\emph{informative}~\cite{cover2012elements} about the maximizer
at the target horizon; see, for example, 
\cite{hennig2012entropy, hernandez2014predictive, wang2017max}. Although such
information-theoretic approaches are known to involve an intractable form of
the information entropies, they are also highly reliant on an accurate underlying GP model~\cite{mackay1992information}. 

In this work, we extend myopic acquisition functions in order to make
nonmyopic decisions for
finite-horizon BO. Specifically, we propose a generalized
framework that specifies a \emph{value function} 
that quantifies an objective at the target horizon $T$ and that can then be used in a lookahead acquisition
function. Our main contributions are as follows. 
\begin{enumerate}
    \item We present the first strategy to solve finite-horizon BO for time-dependent 
    oracles.
    \item We propose lookahead versions of myopic acquisition functions~(\cref{s:mLBO}) that
      are applicable to both finite-budget BO and finite-horizon, time-dependent settings. 
    \item We present a practical algorithm that recursively applies a two-step
      lookahead acquisition function~(\cref{s:r2L}) 
      to make new observations time-dependent oracles.
    \item We establish the utility of our method by demonstrating it on carefully constructed synthetic test functions (\cref{ss:synthetic_1d,ss:synthetic_hd}) and a real-world quantum optimal control problem~(\cref{ss:qoc}). 
\end{enumerate}
Software that implements our lookahead framework for limited horizon, finite-budget time-dependent BO problems is available upon request and will be released publicly with the final version of this manuscript.

The remainder of the paper is organized as follows. In  \cref{s:mLBO}
we present our generalized lookahead acquisition framework for generic value
functions. In \cref{s:r2L} we present a recursive two-step lookahead
approach that is more practical than a more-than-two-step lookahead approach.
We discuss theoretical properties of our approach in \cref{s:theory}.
We then
present the results of our numerical experiments on synthetic test functions
(\cref{ss:synthetic_1d,ss:synthetic_hd}) as well as a
real-world quantum optimal control problem in \cref{ss:qoc}. We conclude the paper with an outlook on future work.

\section{Generalized m-step lookahead acquisition function}
\label{s:mLBO}
Given the data from $n$ observations $\D_n$, let us define a value function $\upsilon$ by 
\begin{equation}
    \upsilon_n(\x, t; \bs{\omega}) \defeq \mbb{E}_{y\sim Y|\D_n}\left[
    h(y(\x, t); \bs{\omega})     
    \right],
    \label{e:value_function}
\end{equation}
where $h$ is a scalar-valued function, $\bs{\omega} \in \mbb{R}^p$ are
parameters that parameterize $h$, 
$Y|\D_n$ is the GP posterior 
given $n$ observations, and $y(\x, t)$ is a sample path realized from $Y(\x,
t)$. For example, the \texttt{PI}, \texttt{EI}, and \texttt{UCB} acquisition functions can be obtained from defining $h$ as 

\begin{equation}
    \begin{split}
    h(y(\x, t); \xi) 
    &= y(\x, t) - \xi \implies \upsilon(\x,t;\xi) = \Phi \left( z \right) \qquad \text{(PI)}\\
    h(y(\x, t); \xi) 
    &= [y(\x, t) - \xi]^+ \implies \upsilon(\x,t;\xi) = z\sigma(\x, t) \Phi\left(z \right) + \sigma_n(\x, t) \phi \left( z \right) \qquad \text{(EI)}\\
        h(y(\x, t); \beta)
        &= y(\x, t) +  \beta^{1/2}  \sigma(\x,t) \implies \upsilon(\x,t;\beta) = \mu(\x, t) + \beta^{1/2}\sigma(\x, t), \qquad \text{(UCB)} 
        \end{split}
    \label{e:value_functions}
\end{equation}
where recall that $\xi$ and $\beta$ are user-specified target and confidence parameters, respectively.
Similarly, when $h$ is the identity function $h(y(\x, t); \bs{\omega})  = y(\x, t)$, then $\upsilon(\x,t)= \mu(\x, t)$. 

We use the following notation for conciseness:
\begin{align*}
&\x^t_{n+1} \defeq (\x, t_{n+1}) && \D_{n,j} \defeq \D_n \bigcup_{i=1,\ldots,j} \{ \x^t_{n+i}, y_{n+i} \}\\ 
&y_{n+1} \defeq y(\x^t_{n+1}) &&\mbb{E}_{n,j} \defeq \mbb{E}_{y\sim Y|\D_{n,j}} \\
&\mu_{n,j} \defeq \mbb{E}_{n,j}Y, ~~\sigma^2_{n,j} \defeq \mbb{E}_{n,j} \left( Y - \mu_{n,j}\right)^2 && \upsilon^T_{n,j} \defeq \upsilon_{n,j}(\x, T; \bs{\omega}) = \mbb{E}_{n,j}~\left[h(y(\x, T); \bs{\omega}) \right],
\end{align*}
where we note
that whenever we explicitly use $y(\x,t)$, we refer to a sample path drawn from
$Y(\x,t)$ and therefore $y_{n+1} = y(\x_{n+1},t_{n+1})$ is a draw from $y(\x,t)$ at $(\x_{n+1}, t_{n+1})$. 
On the other hand, when we use $\hat{y}_i$, we refer to an observation of the oracle at
$(\x_i,t_i)$. Similarly, note that $\D_{n,k}$ contains $n$ oracle evaluations 
and $k$ candidate points $\x$ and the $k$ corresponding draws from the GP at times 
$t_{n+1}$, \ldots, $t_{n+k}$. Furthermore, we note here that $\x^t_{n+1}$ is $d$-dimensional.

We define our $m$-step lookahead acquisition function as
\begin{equation}
    \begin{split}
      \alpha_m(\x^t_{n+1}) = \int \ldots \int  & \left[\max_{\tilde{\x} \in \X}~\upsilona{n}{m-1}(\tilde{\x}, T; \bs{\omega})|\D_{n,m-1}\right] ~ p(y_{n+m-2}|\D_{n,m-2})\\ &p(\x^t_{n+m-1}|\D_{n,m-2})\cdots 
    p(y_{n}|\D_n)~dy_{n+m-2}\cdots dy_{n}~d\x^t_{n,m-1} \cdots d\x^t_{n+2}.
    \end{split}
    \label{e:mstep}
\end{equation}
Note that in \eqref{e:mstep}, the \emph{inner maximization} is always with respect to $\x$ at $t=T$ and that after evaluating the integral, the resulting expression is a function of $\x^t_{n+1}$. Furthermore, since we are interested in choosing the maximizer of $\alpha_m(\x^t_{n+1})$ for $t_{n+1}$, we do not marginalize $\x^t_{n+1}$. Computing \eqref{e:mstep} involves the computation of $m-1$ nested expectations. Additionally, when the expectations are approximated via Monte Carlo sampling, there is a total of $N\times (m-1)$ $d$-dimensional  optimization solves due to the inner maximization in \eqref{e:mstep}. The computation or even the Monte Carlo approximation of \eqref{e:mstep} becomes intractable as $m$ increases, warranting further approximation, as is done, for instance, in \cite{gonzalez2016glasses, lam2016bayesian, yue2020non}. When $m=2$, however, the computation of \eqref{e:mstep} is much more straightforward, as we discuss in the following section.

\section{Generalized two-step lookahead acquisition function}
\label{s:r2L}
In this work, as a practical alternative to the computationally intractable multistep ($m>2$) 
lookahead acquisition function, we recursively optimize the two-step lookahead acquisition function~\eqref{e:E_twostep} at each time step of the time-dependent BO problem; this process is graphically described in \cref{f:method}.
Given $\D_n$ and with $m=2$, we define our \emph{two-step} lookahead acquisition function by
\begin{equation}
\begin{array}{rl}
  \alpha_2(\x,t_{n+1})
  
&=  \displaystyle \int \max_{\tilde{\x} \in \X}~ \left[ \upsilon_{n,1}(\tilde{\x}, T; \bs{\omega}) | \, 
\D_n \bigcup \{ (\x,t_{n+1}), y_{n+1} \}
\right]~ p(y_{n}|\D_{n}) ~dy_{n}, 
\end{array}
\label{e:twostep}
\end{equation}
where $p(y_{n}|\D_{n})$ is the probability density of the GP posterior distribution
conditioned on $\D_n$.
We write \eqref{e:twostep} concisely as 
\begin{equation}
  \alpha_2(\x^t_{n+1}) = \mbb{E}_n \left[\max_{\tilde{\x} \in \X}~\upsilona{n}{1}(\tilde{\x},T;\bs{\omega})|\D_{n,1} \right].
    \label{e:E_twostep}
\end{equation}
In \eqref{e:E_twostep}, recall that $\D_{n,1}$ is the union of $\D_n$ and the $n+1$th observation being a draw from the GP $Y_n$, and we compute $\upsilon_{n,1}(\tilde{\x}, T; \bs{\omega})$ by taking the expectation with respect to $Y|\D_{n,1}$. In general, it is not guaranteed that
\eqref{e:E_twostep} admits a closed-form expression. Hence we make the
Monte Carlo approximation
\begin{equation}
    \alpha_2(\x^t_{n+1}) \approx \hat{\alpha}_2(\x^t_{n+1}) \defeq \frac{1}{N} \sum_{j=1}^N \max_{\tilde{\x} \in \X}~\upsilona{n}{1}^{T,j},
    \label{e:twostep_mc}
\end{equation}
where $\upsilona{n}{1}^{T,j} = \upsilona{n}{1}(\tilde{\x}, T)|y^j_{n+1}$ with the $N$ iid samples $\{y^j_{n+1}: \,j=1,\ldots,N\}$ drawn from $Y_{n}$. In practice, the computation of \eqref{e:E_twostep} involves the following steps (given $\x_{n+1}$):
\begin{enumerate}
    \item Generate $N$ iid samples $\{ y^j_{n+1}\}_{i=1}^N$ drawing from $\mcl{N}(\mu_n(\x_{n+1}, t_{n+1}), \sigma_n^2(\x_{n+1}, t_{n+1}))$.
    \item For each $y^j_{n+1},~j=1,\ldots,N$
    \begin{enumerate}
        \item[2.1] update $\D_n$ to $\D_{n,1}$ to get GP posterior $Y|\D_{n,1}$ and
        \item[2.2] maximize a draw of the projected value function $\nu^j = \max_{\tilde{\x} \in \X}~\upsilona{n}{1}(\tilde{\x}, T)|\D_{n,1}$.
    \end{enumerate}
    \item Compute $\hat{\alpha}_2(\x_{n+1}) \approx \frac{1}{N} \sum_{j=1}^N \nu^j$.
\end{enumerate}

In order to make a decision at $t_{n+1}$, \eqref{e:E_twostep} needs to be maximized. We show how the maximization of the acquisition function can be done efficiently by constructing unbiased estimators of the acquisition function's gradient. Under suitable regularity conditions, we can interchange the expectation and gradient
operators (see \cref{thm:gradexp}
) to obtain
\begin{equation}
    \nabla_{\x}\alpha_{2}(\x^t_{n+1}) = \mbb{E}_{n} \left[ \nabla_{\x}~ \max_{\tilde{\x} \in \X}~\upsilona{n}{1}(\tilde{\x}, T) | \D_{n,1} \right].
    \label{e:twostep_T_grad}
\end{equation}
Similar to the Monte Carlo approximation for the acquisition function presented
in \eqref{e:twostep_mc}, \eqref{e:twostep_T_grad} can be approximated as
\begin{equation}
    \nabla_{\x}\alpha_{2}(\x^t_{n+1}) \approx \tilde{\nabla}_{\x}\alpha_{2}(\x^t_{n+1}) = \f{1}{N} \sum_{i=1}^{N} d\nu^i \defeq  \f{1}{N} \sum_{i=1}^{N}  \nabla_{\x}~ \max_{\tilde{\x} \in \X}~\upsilona{n}{1}^{T,i}.
    \label{e:twostep_T_grad_mc}
\end{equation}
In this way, sample-averaged gradients can be used in a gradient-based optimizer
to efficiently optimize \eqref{e:E_twostep}. We clarify the issues associated
with computing the gradients of functions involving the $\max()$ operator as in
\eqref{e:twostep_T_grad_mc} in \Cref{s:theory}. 

We now illustrate in detail the two-step
acquisition function for a specific value function.
\subsection*{Two-step lookahead expected oracle value 
(\texttt{2LEY}) acquisition function}
\label{ss:r2LEY}
\begin{figure}
    \centering
    \includegraphics[width=\textwidth, trim=0.5cm 4cm 0.5cm 6cm,clip]{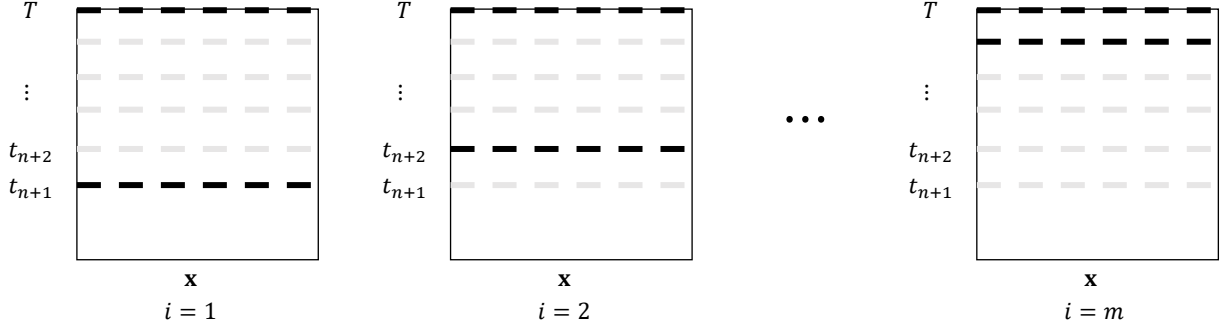}
    \caption{Recursive two-step lookahead method for time-dependent problems.
    Here, we start with $n$ observations up to $t_n$ and have $q-n$ 
    remaining
    observations to make, where $q$ is the total budget. We sequentially select
    one point $\x_{n+i}$ at each step that maximizes the expected value
    function evaluated at $T$; this is denoted by the black dashed lines
    at $t_{n+i}$ and $T$, whereas the gray dashed lines denote the
    remaining decisions to be made. We repeat the same process until the
    decision is made at the penultimate step $i=q-1$. The final decision at $T$
    is made by maximizing the value function at $T$.}
    \label{f:method}
\end{figure}
As an illustration, we set $h$ to be the identity function, and hence $\upsilona{n}{1}(\x, T) = \mu_{n,1}(\x,
T)|\D_{n,1}$.
We define our two-step lookahead expected oracle value (\texttt{2LEY}) acquisition function as
\begin{equation}
    \begin{split}
    \alpha_{2LEY}(\x^t_{n+1}) \defeq & \mbb{E}_n \left[  \underset{\tilde{\x} \in \X}{\max}~\left.\mu_{n,1}(\tilde{\x}, T) \right| \D_{n,1} \right] \\
        =& \mbb{E}_n \left[ \mu^*_{n,1}( \x^t_{n+1}, T) \right],
    \end{split}
    \label{e:2LEY_1}
\end{equation}
where $\mu^*_{n,1} = \mu_{n,1}(\x^*, T)$ with $\x^*$ being a maximizer of $\mu_{n,1}(\x,
T)|\D_{n,1}$ and 
$y_{n+1}$ is a draw from $\mcl{N} \left(\mu_n(\x^t_{n+1}), \sigma_n^2(\x^t_{n+1}) \right)$. 
The dependence of the second line of \eqref{e:2LEY_1} on
$\x^t_{n+1}$ can
be seen by realizing that
\begin{equation}
    \mu_{n,1}^*(\x^t_{n+1}, T)  = \mbf{k}_{n+1}^\top \mbf{K}^{-1}_{n+1} [\mbf{y}_n^\top, y_{n+1}]^\top,
\end{equation}
where $ \mbf{k}_{n+1} = [ k\left((\x^*,T), (\x_1, t_1)\right), \ldots,
k\left((\x^*,T), (\x_n, t_n)\right), k\left((\x^*,T),
(\x_{n+1},t_{n+1})\right)]^\top$, and $\mbf{K}_{n+1, ij} = k\left((\x_i, t_i), (\x_j, t_j) \right);~\forall i=1,\ldots,n+1, ~ j=1,\ldots,n+1$. The expectation in \eqref{e:2LEY_1} is approximated as
\begin{equation}
    \alpha_{2LEY}(\x^t_{n+1}) = \mbb{E}_n \left[ \mu_{n,1}^*(\x^t_{n+1}, T) \right] \approx \frac{1}{N} \sum_{j=1}^N  \mu_{n,1}^*(\x^t_{n+1}, T) | y^j_{n+1}.
    \label{e:2LEY_MC}
\end{equation}
The gradient of $\mu_{n,1}^*(\x^t_{n+1}, T)$ with respect to $\x^t_{n+1}$ is given by
\begin{equation}
    \begin{split}
        \nabla_{\x} \mu_{n,1}^*(\x^t_{n+1}, T) = &\f{\partial \mbf{k}_{n+1}^\top}{\partial \x_{n+1}} \mbf{K}^{-1}_{n+1} + \mbf{k}_{n+1}^\top \f{\partial \mbf{K}^{-1}_{n+1}}{\partial \x_{n+1}} \\
        = &\f{\partial \mbf{k}_{n+1}^\top}{\partial \x_{n+1}} \mbf{K}^{-1}_{n+1} +
        \mbf{k}_{n+1}^\top~\mbf{K}^{-1}_{n+1} \f{\partial \mbf{K}_{n+1}}{\partial \x_{n+1}}\mbf{K}^{-1}_{n+1},
    \end{split}
    \label{e:2LEY-grad_a}
\end{equation}
where $ \f{\partial \mbf{K}_{n+1}}{\partial \x_{n+1}}$ is a matrix of
elementwise derivatives and the second line follows from a well-known lemma on the derivative of a matrix inverse~\cite[pp.~201--202]{rasmussen:williams:2006}.

With a continuously differentiable kernel $k(\cdot, \cdot)$, we have that
\begin{equation}
    \nabla_{\x} \alpha_{2LEY}(\x^t_{n+1}) = \nabla_{\x} \mbb{E}_n \left[ \mu_{n,1}^*(\x^t_{n+1}, T) \right] = \mbb{E}_n \left[\nabla_{\x} \mu_{n,1}^*(\x^t_{n+1}, T) \right],
    \label{e:2LEY_gr_ex_interchange}
\end{equation}
where the interchange of the gradient and expectation operators is via \cref{thm:gradexp}. Then the gradient is approximated as 
\begin{equation}
    \nabla_{\x} \alpha_{2LEY}(\x^t_{n+1}) = \mbb{E}_n \left[\nabla_{\x} \mu_{n,1}^*(\x^t_{n+1}, T) \right] \approx \frac{1}{N} \sum_{j=1}^N \nabla_{\x} \mu_{n,1}^*(\x^t_{n+1}, T) | y^j_{n+1}.
    \label{e:2LEY_grad_MC}
\end{equation}
Furthermore, \eqref{e:2LEY_MC} and
\eqref{e:2LEY_grad_MC} are unbiased estimators for $\alpha_{2LEY}$ and $\nabla_{\x}
\alpha_{2LEY}$, respectively; see \cref{thm:gradexp}. This gradient can then be used in a gradient-based optimizer to
maximize our two-step lookahead acquisition function in an efficient manner.

Similarly, the two-step lookahead  extensions for
\texttt{EI}, \texttt{PI}, and \texttt{UCB} are obtained by picking the
corresponding value functions as $\alpha_{\EI}(\x,T)$, $\alpha_{\PI}(\x,T)$ and $\alpha_{\UCB}(\x,T)$ and are
named respectively, $\alpha_{2LEI}$, $\alpha_{2LPI}$, and $\alpha_{2LUCB}$. Our
overall algorithm recursively maximizes the two-step acquisition function until
the penultimate step and maximizes the actual value function at the final
step. The overall algorithm is graphically depicted in \cref{f:method} and is presented in
Algorithm~\ref{a:r2LBO}. We refer to Algorithm~\ref{a:r2LBO} with the value functions chosen as $\mu_{n,1}(\x, T)$, $\alpha_{EI}$, $\alpha_{PI}$, and $\alpha_{UCB}$ as \texttt{r2LEY}, \texttt{r2LEI}, \texttt{r2LPI}, and \texttt{r2LUCB}, respectively.


\begin{algorithm2e}[H]
\SetAlgoLined
\textbf{Given:} $\D_n$,  
total budget $q$, 
schedule $\lbrace t_{n+1},\ldots, t_{q}\defeq  T \rbrace$, and GP hyperparameters $\bm{\Omega}$ \\
\KwResult{$(\x_T, \hat{y}_T)$}
  \For{$i=n+1, \ldots, q-1$, }{
  Define value function $\upsilon^T_{i-1, 1}|\D_{i-1, 1}$ \\
  Find $\x_i = \underset{\x \in \X}{\argmax}~ \hat{\alpha}_2(\x, t_i)$, \qquad (acquisition function maximization)\\
  \quad with $\hat{\alpha}_2$ and $\nabla_{\x}~\hat{\alpha}_2$ evaluated via \eqref{e:twostep_mc} and \eqref{e:twostep_T_grad_mc} resp. \\
    Observe $\hat{y}_i$ = $f(\x_i, t_i) + \epsilon_i$\\ 
    Append $\D_i = \D_{i-1} \cup \lbrace (\x_i, t_i), \hat{y}_i \rbrace$\\ 
    Update GP hyperparameters $\bm{\Omega}$ \\
 }
 Find $\x_T \in \underset{\x \in \X}{\argmax}~\upsilon_{q}(\x, T)|\D_{q-1}$ \qquad (value function maximization)\\
 Observe $\hat{y}_T$ = $f(\x_T, T) + \epsilon_T$\\ 
 \caption{Recursive two-step lookahead Bayesian optimization} 
 \label{a:r2LBO}
\end{algorithm2e}

\section{Theoretical properties}
\label{s:theory}

We now present the theoretical foundation of our method and a few properties.  
  Let $\mcl{Y}$
  be the support of $y_{n+1}$, whose density is $p(y_{n+1})$. As before, let the
  domain of $\x^t_{n+1}$ be $\X$. 
We make the following remark about the differentiability of our acquisition function; this justifies the interchange of the gradient and expectation operators.

\begin{remark}
We write the inner term of our two-step lookahead acquisition function in \eqref{e:E_twostep} as $\max_{\x\in\X} \mbb{E}_{n,1}h(y)$, where we omit the arguments of $y$ and $\bs{\omega}$ for the sake of conciseness. The GP posterior mean $\mu(\x, t)$ and standard deviation $\sigma(\x, t)$ are continuously differentiable with the sufficient condition that we choose a continuously differentiable prior covariance kernel $k$~\cite{smith1995differentiation, wang2016parallel}. We then use the reparameterization trick~\cite{wilson2017reparameterization,
wilson2018maximizing} to write the GP sample path as $y(\x, T) \defeq \mu(\x, T) + \sigma(\x,
T)\times \gamma$, where $\gamma \sim \mcl{N}(0, 1)$, which makes the differentiability of our acquisition function transparent. 


From a purely implementation standpoint, we use the gradient $\nabla
\max_{\tilde{\x}\in\X}{\mbb{E}_{n, 1}~h(y)}$ if it exists and a
\emph{subgradient}~\cite{boyd2004convex} otherwise. In our implementation, the
gradients are computed by backpropagating the output of the
$\max()$ operation; in \Cref{app:grad_nd} we provide examples of how (sub)gradients are calculated for nonsmooth functions.
\end{remark}

Let $\x^*$ be a maximizer of $\mbb{E}_{n,1} h(y(\x, T); \bs{\omega}) = \upsilon_{n,1}(\x, T)|\D_{n,1}$. In what follows, we write
  $g^*(\x^t_{n+1}, T, y_{n+1}) \defeq \upsilon_{n,1}(\x^*, T)|\D_{n,1}$ to denote the maximum value of the value function at $T$ after having updated $\D_n$ with $y_{n+1}$, and $\nabla_{\x}g^* \defeq \partial g^*(\x^t_{n+1}, T, y_{n+1})/\partial \x^t_{n+1} $. Note again that the dependence of $g^*$ on $\x^t_{n+1}$ is due to $\D_{n,1}$.
  
\begin{theorem}[Interchange of gradient and expectation operators] 
\label{thm:gradexp}
 Let $\nabla_{\x}g^*$ exist, and let $g^*(\x^t_{n+1}, T, y_{n+1})\times
  p(y_{n+1})$ and $\nabla_{\x}g^* \times
  p(y_{n+1})$ be continuous on $\X \times \mcl{Y}$. 
  Further assume
  that there exist functions $q_0$ and $q_1$ such that
\begin{equation*}
\begin{array}{rl}
  |g^*(\x^t_{n+1}, T, y_{n+1}) \times p(y_{n+1})| &\leq q_0(y_{n+1}) \\
  \|
  \nabla_{\x} g^* \times  p(y_{n+1})\| &\leq q_1(y_{n+1})
  \end{array}
  \qquad \forall (\x^t_{n+1}, y_{n+1}) \in \X \times \mcl{Y}, 
  \end{equation*}
where
  $\int_{\mcl{Y}}q_1(y_{n+1}) dy_{n+1} < \infty$ and
  $\int_{\mcl{Y}}q_2(y_{n+1}) dy_{n+1} < \infty$. Then,
\[ \nabla_{\x} \mbb{E}_j \left[ g^*(\x^t_{n+1}, T, y_{n+1}) \right] = \mbb{E}_n \left[\nabla_{\x} g^*(\x^t_{n+1}, T, y_{n+1}) \right].\]

Furthermore, a realization of $\nabla_{\x} g^*(\x^t_{n+1}, T, y_{n+1})$ yields
an unbiased estimate of the gradient of $\alpha_2(\x_{n+1}^t)$ in \eqref{e:E_twostep}.
\end{theorem}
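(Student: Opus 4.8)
The plan is to prove this as an application of the Leibniz integral rule (differentiation under the integral sign), treating the two-step acquisition function $\alpha_2(\x^t_{n+1}) = \mbb{E}_n[g^*(\x^t_{n+1}, T, y_{n+1})] = \int_{\mcl{Y}} g^*(\x^t_{n+1}, T, y_{n+1})\, p(y_{n+1})\, dy_{n+1}$ as an integral over the fixed domain $\mcl{Y}$ with respect to the base density $p(y_{n+1})$. The hypotheses have been chosen precisely to invoke the standard dominated-convergence form of this theorem, so the proof is mostly a matter of verifying and assembling the pieces.

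First I would make the measure-theoretic setup explicit: write $F(\x^t_{n+1}) := \int_{\mcl{Y}} g^*(\x^t_{n+1},T,y_{n+1})\, p(y_{n+1})\, dy_{n+1}$, and note that by the assumed integrable dominating function $q_0$ (the excerpt writes $q_2$ in the final integrability condition, which I would treat as a typo for $q_0$), the integrand $g^* \times p(y_{n+1})$ is integrable in $y_{n+1}$ for each fixed $\x^t_{n+1}$, so $F$ is well-defined and finite on $\X$. Second, I would fix an arbitrary coordinate direction $e_k$ of $\x^t_{n+1}$ and a point in the interior of $\X$, form the difference quotient $\tfrac{1}{h}\big(F(\x^t_{n+1}+he_k) - F(\x^t_{n+1})\big) = \int_{\mcl{Y}} \tfrac{1}{h}\big(g^*(\x^t_{n+1}+he_k,T,y_{n+1}) - g^*(\x^t_{n+1},T,y_{n+1})\big) p(y_{n+1})\, dy_{n+1}$, and apply the mean value theorem pointwise in $y_{n+1}$: the inner difference quotient equals $\partial_{x_k} g^*(\bs{\xi},T,y_{n+1})\, p(y_{n+1})$ for some $\bs{\xi}$ on the segment, hence is bounded in absolute value by $q_1(y_{n+1})$ uniformly in $h$ (using the assumed bound $\|\nabla_{\x} g^* \times p(y_{n+1})\| \le q_1(y_{n+1})$ which holds on all of $\X \times \mcl{Y}$). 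Since $\nabla_{\x} g^* \times p(y_{n+1})$ is assumed continuous on $\X \times \mcl{Y}$, the integrand converges pointwise as $h \to 0$ to $\partial_{x_k} g^*(\x^t_{n+1},T,y_{n+1})\, p(y_{n+1})$. Dominated convergence with dominating function $q_1 \in L^1(\mcl{Y})$ then lets me pass the limit inside the integral, giving $\partial_{x_k} F(\x^t_{n+1}) = \int_{\mcl{Y}} \partial_{x_k} g^*(\x^t_{n+1},T,y_{n+1})\, p(y_{n+1})\, dy_{n+1}$. Assembling over $k=1,\ldots,d$ yields $\nabla_{\x} F = \mbb{E}_n[\nabla_{\x} g^*]$, which is the claimed identity (with $\mbb{E}_j$ in the statement read as $\mbb{E}_n$).

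For the unbiasedness claim I would simply observe that it is an immediate restatement of the interchange identity: by \eqref{e:E_twostep} the acquisition function is $\alpha_2(\x^t_{n+1}) = \mbb{E}_n[g^*(\x^t_{n+1},T,y_{n+1})]$, so the identity just proved gives $\nabla_{\x}\alpha_2(\x^t_{n+1}) = \mbb{E}_n[\nabla_{\x} g^*(\x^t_{n+1},T,y_{n+1})]$; hence for a single draw $y_{n+1} \sim p(y_{n+1})$, the random vector $\nabla_{\x} g^*(\x^t_{n+1},T,y_{n+1})$ has expectation equal to $\nabla_{\x}\alpha_2(\x^t_{n+1})$, i.e.\ it is an unbiased estimator. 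By linearity the Monte Carlo average in \eqref{e:twostep_T_grad_mc} inherits the same unbiasedness.

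The main obstacle — more conceptual than technical here — is justifying that $\nabla_{\x} g^*$ exists and behaves well, i.e.\ that differentiating the \emph{value of an inner maximization} $g^*(\x^t_{n+1},T,y_{n+1}) = \max_{\tilde\x} \upsilon_{n,1}(\tilde\x,T)|\D_{n,1}$ with respect to the conditioning point $\x^t_{n+1}$ is legitimate. Strictly speaking the theorem sidesteps this by \emph{assuming} $\nabla_{\x} g^*$ exists, but I would add a supporting remark invoking an envelope/Danskin-type argument: since the feasible set $\X$ is compact and $\upsilon_{n,1}(\tilde\x,T)$ depends on $\x^t_{n+1}$ only through the GP update formulas (which are smooth in $\x^t_{n+1}$ away from kernel-matrix degeneracies, as noted via the continuously differentiable kernel assumption and \eqref{e:2LEY-grad_a}), the maximum value is differentiable wherever the maximizer $\x^*$ is unique, with gradient $\nabla_{\x} g^* = \partial_{\x^t_{n+1}} \upsilon_{n,1}(\x^*,T)|\D_{n,1}$ holding $\x^*$ fixed; at points of nonuniqueness one falls back on the subgradient, as already flagged in the Remark. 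The continuity hypotheses on $g^* \times p$ and $\nabla_{\x} g^* \times p$ then cover the measure-zero exceptional set automatically as far as the integral identity is concerned, so no further work is needed beyond citing this.
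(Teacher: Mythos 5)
Your proposal is correct and follows essentially the same route as the paper's own proof: write the expectation as an integral, form the difference quotient, apply the mean value theorem pointwise in $y_{n+1}$ so the assumed bound by the integrable function $q_1$ dominates the quotient uniformly in $h$, and pass the limit inside via Lebesgue's dominated convergence theorem, with unbiasedness following immediately from the resulting identity. Your version is if anything slightly more explicit than the paper's (coordinate-wise treatment, spelling out the domination, reading $q_2$ and $\mbb{E}_j$ as typos for $q_0$ and $\mbb{E}_n$, and the optional envelope-theorem remark on the existence of $\nabla_{\x} g^*$), but it is not a different argument.
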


\begin{proof}
Using $g^*(\x^t_{n+1})$ to denote  $g^*(\x^t_{n+1}, T, y_{n+1})$ for the sake of brevity, we have
\begin{equation}
\begin{split}
    \nabla_\x \alpha_2(\x^t_{n+1}) \defeq \frac{\partial}{\partial \x^t_{n+1}} \mbb{E}_n \left[ g^*(\x^t_{n+1}) \right] &= \underset{h\rightarrow 0}{\T{lim}}~\frac{1}{h} \left \lbrace \mbb{E}_n \left[ g^*(\x^t_{n+1}+h) \right] -  \mbb{E}_n \left[ g^*(\x^t_{n+1}) \right] \right \rbrace \\
    &= \underset{h\rightarrow 0}{\T{lim}}~ \left \lbrace \mbb{E}_n \frac{1}{h} \left[ g^*(\x^t_{n+1}+h) - g^*(\x^t_{n+1}) \right] \right \rbrace \\
    &= \underset{h\rightarrow 0}{\T{lim}}~ \left \lbrace \mbb{E}_n \frac{\partial g^*(\bar{\x}^h_{n+1})}{\partial \x_{n+1}}  \right \rbrace ,\\
\end{split} 
\label{e:grad_limit}
\end{equation}
where $\bar{\x}^h_{n+1} = \lambda \x^t_{n+1} + (1-\lambda)(\x^t_{n+1} + h)$ for some
$\lambda \in [0,1]$ and the last line follows from the first-order mean value
theorem.
We bring the limit inside the integral by Lebesgue's dominated convergence theorem to obtain
\begin{equation}
    \frac{\partial}{\partial \x^t_{n+1}} \mbb{E}_n \left[ g^*(\x^t_{n+1}) \right] =  \mbb{E}_n~\underset{h\rightarrow 0}{\T{lim}} \frac{\partial g^*(\bar{\x}^h_{n+1})}{\partial \x^t_{n+1}}   = \mbb{E}_n \frac{\partial g^*(\x^t_{n+1})}{\partial \x^t_{n+1}}.
    \label{e:LDC}
\end{equation}
Since \eqref{e:grad_limit} holds, a realization $\partial g^*(\x^t_{n+1}) / \partial \x^t_{n+1}$ yields an unbiased estimate of the gradient $\nabla_\x \alpha_2(\x^t_{n+1})$.
\end{proof}

\begin{proposition}[Special case when $m=2$ and $\upsilon^T = \mu(\x, T)$] 
When $m=2$, our \texttt{2LEY} acquisition function is equivalent to the knowledge gradient
(KG)~\cite{frazier2008knowledge, wu2016parallel} acquisition function offset by a constant factor. 

\begin{proof}
The KG acquisition function, in the context of our problem, is given as
\begin{equation}
    \alpha_{KG}(\x^t_{n+1}) = \mbb{E}_{n}\left[ \underset{\tilde{\x} \in \X}{\max}~ \mu_{n,1}(\tilde{\x},T)|\D_{n,1} \right] - \underset{\x \in \X}{\max}~ \mu_{n}(\x,T).
    \label{e:KG_time}
\end{equation}

With $m=2$, the \texttt{2LEY} acquisition function at $\x_{n+1}$ is defined as follows:
\begin{equation}\label{eq:nm1}
    \begin{split}
    \alpha_{2LEY}(\x^t_{n+1}) & = \int_{y_{n}} \left[ \max_{\tilde{\x} \in \X} ~\mbb{E}_{n,1} ~y(\tilde{\x},T)|\D_{n,1} \right] p(y_{n}|\D_{n})~dy_{n} \\
    & = \int_{y_{n}}\left[  \max_{\tilde{\x} \in \X}~ \mu_{n,1}(\tilde{\x},T)|\D_{n,1} \right] p(y_{n}|\D_{n})~dy_{n} \\
    & = \mbb{E}_{n}\left[ \max_{\tilde{\x} \in \X}~ \mu_{n,1}(\tilde{\x},T)|\D_{n,1} \right] \\
    & = \alpha_{KG}(\x^t_{n+1}) + \max_{\tilde{\x} \in \X}~\mu_{n}(\tilde{\x},T), 
    \end{split}
\end{equation}
where the last line follows from \cref{e:KG_time}. Note that $\alpha_{2LEY}$ and $\alpha_{KG}$ share the same maximizer. 
\end{proof}
\label{prop:special_cases}
\end{proposition}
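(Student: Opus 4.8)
The plan is to unwind the $m=2$ specialization of the general lookahead acquisition function \eqref{e:E_twostep} with $h$ taken to be the identity, collapse the inner expectation to a one-step-updated posterior mean, and then recognize the resulting expression as $\alpha_{KG}$ from \eqref{e:KG_time} plus a term that is constant in the decision variable $\x^t_{n+1}$.

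First I would specialize \eqref{e:mstep} (equivalently \eqref{e:E_twostep}) to $m=2$, writing $\alpha_{2LEY}(\x^t_{n+1}) = \int_{y_n} \big[ \max_{\tilde{\x}\in\X} \upsilona{n}{1}(\tilde{\x},T;\bs{\omega}) \,\big|\, \D_{n,1}\big]\, p(y_n|\D_n)\, dy_n$. Because $h$ is the identity, the inner value function reduces, by definition of the GP posterior mean, to $\upsilona{n}{1}(\tilde{\x},T) = \mbb{E}_{n,1}\big[y(\tilde{\x},T)\,\big|\,\D_{n,1}\big] = \mu_{n,1}(\tilde{\x},T)$. Reading the outer integral against $p(y_n|\D_n)$ as the expectation $\mbb{E}_n[\cdot]$ in the notation of \cref{s:mLBO}, this yields $\alpha_{2LEY}(\x^t_{n+1}) = \mbb{E}_n\big[\max_{\tilde{\x}\in\X}\mu_{n,1}(\tilde{\x},T)\,\big|\,\D_{n,1}\big]$. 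Comparing this term by term with the KG acquisition function \eqref{e:KG_time}, the two differ by exactly $\max_{\x\in\X}\mu_n(\x,T)$, so $\alpha_{2LEY}(\x^t_{n+1}) = \alpha_{KG}(\x^t_{n+1}) + \max_{\x\in\X}\mu_n(\x,T)$. Since adding a quantity that does not depend on $\x^t_{n+1}$ leaves the set of maximizers unchanged, this also shows that $\alpha_{2LEY}$ and $\alpha_{KG}$ share the same maximizer, completing the statement.

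I do not anticipate a genuine obstacle: the claim is essentially a bookkeeping identity once the $m=2$, $h=\T{identity}$ case is written out. The single point that deserves an explicit sentence is that $\max_{\x\in\X}\mu_n(\x,T)$ is truly constant in the decision variable — it is computed from $\D_n$ alone, whereas the dependence of $\mu_{n,1}$ (and hence of $\alpha_{2LEY}$) on $\x^t_{n+1}$ enters only through the augmented data set $\D_{n,1}$ — so that the ``offset by a constant'' assertion, and therefore the equality of maximizers, is justified. I would also note that no interchange of $\max$ with $\mbb{E}_n$ is invoked anywhere, since in both $\alpha_{2LEY}$ and $\alpha_{KG}$ the inner maximization already sits inside the expectation.
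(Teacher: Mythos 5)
Your proposal is correct and follows essentially the same route as the paper: specialize to $m=2$ with $h$ the identity so that $\upsilon_{n,1}(\tilde{\x},T)=\mu_{n,1}(\tilde{\x},T)$, read the outer integral as $\mbb{E}_n[\max_{\tilde{\x}\in\X}\mu_{n,1}(\tilde{\x},T)|\D_{n,1}]$, and compare with \eqref{e:KG_time} to identify the offset $\max_{\x\in\X}\mu_n(\x,T)$. Your added remark that this offset is computed from $\D_n$ alone (hence constant in $\x^t_{n+1}$) is a helpful clarification of the paper's implicit step but does not change the argument.
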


\begin{proposition}[Special case when $|T-t| \gg\theta_t$] 
With the prior assumption $f(\x,t)\sim \mcl{GP}(0, k(\cdot,\cdot))$, where $k(\cdot,\cdot) \defeq k_{\x}(\cdot,\cdot; ~\theta_{\x})\times k_t(\cdot,\cdot; ~\theta_{t})$, with $k_{\x}$ and $k_t$ being squared-exponential kernels,
our two-step lookahead acquisition function approaches $\alpha_2(\x^t_{n+1}) = \max_{\tilde{\x}\in\X}{\upsilon_n(\tilde{\x}, T)}$ as $|T - t_{n+1}| \rightarrow \infty$. In other words, far away from the target horizon $T$, $\alpha_2(\x^t_{n+1})$ approaches a constant function.


\begin{proof}
When $|T-t_{n+1}| \gg \theta_t$, then $k_t(t_{n+1}, T) \approx 0$; this yields $k\left( (\x,t_{n+1}), (\x,T) \right) \approx 0$. Let $\mu_n$ and $\sigma_n^2$ be the posterior mean and variance given $\D_n$, respectively. Given another observation 
$y_{n+1}$ drawn from the GP $Y_n$ at $(\x, t_{n+1})$, where $|T-t_{n+1}| \gg \theta_t$, the updated posterior mean and variance are given by
\begin{equation}
    \begin{split}
        \mu_{n,1} =& \mbf{k}^\top_{n+1}\mbf{K}^{-1}_{n+1}[\mbf{y}^\top_n, y_{n+1}]^\top  \approx [\mbf{k}^\top_n, 0] \begin{bmatrix}
        \mbf{K}_n & \mbf{0} \\
        \mbf{0} & 1
        \end{bmatrix}^{-1}[\mbf{y}^\top_n, y_{n+1}]^\top = \mu_n \\
        \sigma^2_{n,1} =& 1-\mbf{k}^\top_{n+1}\mbf{K}^{-1}_{n+1}\mbf{k}_{n+1} \approx 1 - [\mbf{k}^\top_n, 0] \begin{bmatrix}
        \mbf{K}_n & \mbf{0} \\
        \mbf{0} & 1
        \end{bmatrix}^{-1}[\mbf{k}^\top_n, 0]^\top = \sigma_n^2,
    \end{split}
\end{equation}
where we have assumed, without loss of generality, that $k((\x,t), (\x,t))=1$ and have used the matrix inversion lemma.
Therefore, writing our acquisition function using the reparameterization trick, we see that
\begin{equation}
    \begin{split}
    \alpha_2(\x^t_{n+1}) =& \mbb{E}_n \left[\max_{\tilde{\x}\in\X}{\mbb{E}_{n, 1}~h(y(\tilde{\x}, T))} \right]\\ =& \mbb{E}_n \left[\max_{\tilde{\x}\in\X}{\mbb{E}_\gamma~h(\mu_{n,1}(\tilde{\x}, T) + \sigma_{n,1}(\tilde{\x}, T)\gamma)} \right]\\ \approx& 
    \mbb{E}_n \left[\max_{\tilde{\x}\in\X}{\mbb{E}_\gamma~h(\mu_{n}(\tilde{\x}, T) + \sigma_{n}(\tilde{\x}, T)\gamma)} \right]\\
    =& \max_{\tilde{\x}\in\X}{\upsilon_n(\tilde{\x}, T)},
    \end{split}
\end{equation}
and thus, in the limit $|T - t_{n+1}| \rightarrow \infty$,
$\alpha_2(\x^t_{n+1}) = \max_{\tilde{\x}\in\X}{\upsilon_n(\tilde{\x}, T)}$.
\end{proof}
\label{prop:t_llt_T}

\begin{remark}
    Following \Cref{prop:t_llt_T}, one can  see that if $k_t(t_{n+1}, T) \approx 0$, then $k_t(t_i, T) \approx 0,~\forall i=1,\ldots, n$ since $t_{i} > t_{i-1}$. This further leads to $\mbf{k}_{n+1} \approx \mbf{0}$ and hence $\mu_{n,1} \approx 0$ and $\sigma_{n,1}^2 \approx 1$. This results in (for $|T-t_{n+1}| \gg \theta_t$),
    \begin{equation}
    \begin{split}
    \alpha_{2}(\x^t_{n+1}) =& \mbb{E}_n \left[\max_{\tilde{\x}\in\X}{\mbb{E}_\gamma~h(\mu_{n}(\tilde{\x}, T) + \sigma_{n}(\tilde{\x}, T)\gamma)} \right] \\
    \approx & \mbb{E}_n \left[\max_{\tilde{\x}\in\X}{\mbb{E}_\gamma~h(\gamma)} \right] \\
    =& c,
    \end{split}
    \label{e:2_inner}
\end{equation}
where $c$ is some constant. Suppose the $(n+1)$th decision is made by maximizing $\alpha_{2}(\x^t_{n+1})$ with a multistart local optimization procedure started from a set of starting points $\{\x\} \in \X$, where $\{\x\}$ are sampled from distribution $p_\x$. Then, in the limit $|T - t_{n+1}| \rightarrow \infty$, decisions made via our lookahead acquisition function are equivalent to samples drawn randomly from $p_\x$.\footnote{Note that this assumes that the local optimizer seeks a critical point and returns the starting point as the maximizer.} 

We take advantage of this property and set $p_\x = \mcl{U}(\X)$, where $\mcl{U}$ is the uniform distribution. In practice, this could result in a decisions that \emph{explore} $\X$ when the $T$ is far away and hence can lead to improved learning of $f(\x,t)$, thus facilitating a more accurate prediction of the maximum at $T$.
\end{remark}

\label{prop:special_cases2}
\end{proposition}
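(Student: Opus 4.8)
The plan is to show that, far from the horizon, a hypothetical observation made at $(\x,t_{n+1})$ is asymptotically uninformative about $f(\cdot,T)$, so the inner value function at $T$ becomes deterministic and $\x$-independent and $\alpha_2(\cdot,t_{n+1})$ collapses to the constant $\max_{\tilde{\x}\in\X}\upsilon_n(\tilde{\x},T)$. First I would use that the observation schedule is increasing, $t_1<\cdots<t_n<t_{n+1}<T$, so $|T-t_i|\ge|T-t_{n+1}|$ for every $i\le n+1$; since $k_t$ is squared-exponential, $k_t(T,t_i)=\exp(-|T-t_i|^2/2\theta_t^2)\to0$ uniformly in $i$ as $|T-t_{n+1}|\to\infty$, and because $k_\x\le1$ the prior cross-covariance vector $\mbf{k}_{n+1}$ between $(\tilde{\x},T)$ and the $n+1$ design points $\{(\x_i,t_i)\}_{i=1}^n\cup\{(\x,t_{n+1})\}$ tends to $\mbf{0}$ uniformly over $\tilde{\x}\in\X$. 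Substituting $\mbf{k}_{n+1}\to\mbf{0}$ into the posterior formulas \eqref{e:GP}, and using that $(\mbf{K}_{n+1}+\sigma_\epsilon^2\mbf{I})^{-1}$ has operator norm at most $\sigma_\epsilon^{-2}$ while $\mbf{y}_n$ is fixed given $\D_n$, I get $\mu_{n,1}(\tilde{\x},T)\to0$ and $\sigma^2_{n,1}(\tilde{\x},T)\to k((\tilde{\x},T),(\tilde{\x},T))=1$ uniformly on $\X$ for every draw $y_{n+1}$; the identical computation with the $n$-point quantities gives $\mu_n(\tilde{\x},T)\to0$ and $\sigma_n^2(\tilde{\x},T)\to1$, so $\mu_{n,1}-\mu_n\to0$ and $\sigma_{n,1}-\sigma_n\to0$ uniformly.

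Next I would transfer this to the acquisition function through the reparameterization trick: writing $y(\tilde{\x},T)=\mu_{n,1}(\tilde{\x},T)+\sigma_{n,1}(\tilde{\x},T)\gamma$ with $\gamma\sim\mcl{N}(0,1)$, the inner term is $\mbb{E}_\gamma\,h(\mu_{n,1}(\tilde{\x},T)+\sigma_{n,1}(\tilde{\x},T)\gamma;\bs{\omega})$ and $\upsilon_n(\tilde{\x},T)=\mbb{E}_\gamma\,h(\mu_n(\tilde{\x},T)+\sigma_n(\tilde{\x},T)\gamma;\bs{\omega})$. Under mild domination assumptions on $h$ (so dominated convergence applies to the $\gamma$-integral), the uniform convergence from the previous step forces $\mbb{E}_{n,1}h(y(\tilde{\x},T);\bs{\omega})-\upsilon_n(\tilde{\x},T)\to0$ uniformly on the compact set $\X$, and uniform convergence lets me pass the limit through $\max_{\tilde{\x}\in\X}$. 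Since $\max_{\tilde{\x}\in\X}\upsilon_n(\tilde{\x},T)$ is a function of $\D_n$ only — in particular it does not depend on $y_{n+1}$ — the outer expectation in $\alpha_2(\x^t_{n+1})=\mbb{E}_n[\max_{\tilde{\x}\in\X}\upsilona{n}{1}(\tilde{\x},T;\bs{\omega})\mid\D_{n,1}]$ (see \eqref{e:E_twostep}) is over a near-constant, and a final dominated-convergence step gives $\alpha_2(\x^t_{n+1})\to\max_{\tilde{\x}\in\X}\upsilon_n(\tilde{\x},T)$, which is independent of $\x$; hence $\alpha_2(\cdot,t_{n+1})$ approaches a constant function.

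The main obstacle is the chain of limit interchanges — taking $|T-t_{n+1}|\to\infty$ inside $\mbb{E}_{n,1}$, inside $\max_{\tilde{\x}\in\X}$, and inside $\mbb{E}_n$. The two expectations are controlled by the domination hypotheses on $h$ (and the fact that $y_{n+1}$ has a fixed Gaussian law given $\D_n$), while the $\max$ requires the limits $\mu_{n,1}\to0$, $\sigma_{n,1}\to1$ to be \emph{uniform} in $\tilde{\x}$; this in turn rests on the uniform decay of $\mbf{k}_{n+1}$ and a uniform bound on $\|(\mbf{K}_{n+1}+\sigma_\epsilon^2\mbf{I})^{-1}\|$, for which the noise floor $\sigma_\epsilon^2>0$ is exactly what is needed. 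A more computational alternative relies on block-matrix identities — approximate $\mbf{k}_{n+1}\approx[\mbf{k}_n^\top,0]^\top$ and $\mbf{K}_{n+1}\approx\mathrm{blockdiag}(\mbf{K}_n,1)$ and apply the matrix-inversion lemma to obtain $\mu_{n,1}\approx\mu_n$ and $\sigma^2_{n,1}\approx\sigma_n^2$ — but I would favor the route above, since the block-diagonal approximation of $\mbf{K}_{n+1}$ tacitly also requires $t_{n+1}$ to be far from the past observation times, whereas in an increasing schedule the condition $|T-t_{n+1}|$ large already places $T$ far from every design point and the posterior at $T$ simply reverts to the $\mcl{N}(0,1)$ prior.
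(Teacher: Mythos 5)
Your argument is correct, and it reaches the paper's conclusion by a route that is close in spirit but technically different from the paper's own proof. The paper's proof keeps only the last entry of $\mbf{k}_{n+1}$ at zero, replaces $\mbf{K}_{n+1}$ by the block-diagonal matrix $\mathrm{blockdiag}(\mbf{K}_n,1)$, and invokes the matrix inversion lemma to conclude $\mu_{n,1}\approx\mu_n$ and $\sigma^2_{n,1}\approx\sigma_n^2$ before applying the reparameterization trick; as you observe, that block-diagonal step tacitly assumes the fantasy point $(\x,t_{n+1})$ is also decorrelated from the past design points, which does not follow from $|T-t_{n+1}|\gg\theta_t$ alone. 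You instead exploit the increasing schedule $t_1<\cdots<t_{n+1}<T$ to send the \emph{entire} cross-covariance vector $\mbf{k}_{n+1}$ (and $\mbf{k}_n$) to zero uniformly over $\tilde{\x}\in\X$, so that both the $n$-point and $(n{+}1)$-point posteriors at the horizon revert to the prior; this is essentially the observation the paper itself makes in the Remark following the proposition ($\mbf{k}_{n+1}\approx\mbf{0}$, hence $\mu_{n,1}\approx0$, $\sigma^2_{n,1}\approx1$), promoted to the main argument. What your route buys is a derivation free of the unstated decorrelation assumption, together with an explicit treatment of the limit interchanges the paper elides: uniform control of $\|(\mbf{K}_{n+1}+\sigma^2_\epsilon\mbf{I})^{-1}\|$ via the noise floor, uniformity in $\tilde{\x}$ to pass the limit through $\max_{\tilde{\x}\in\X}$, and dominated convergence for the inner $\gamma$-expectation and the outer $\mbb{E}_n$. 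What it costs is the extra (mild) domination hypotheses on $h$, and one small caution: the law of $y_{n+1}$ given $\D_n$ is not fixed as $t_{n+1}$ varies, though its mean and variance are uniformly bounded, which is all your dominated-convergence step actually needs; the pointwise limits $\mu_{n,1}\to0$, $\sigma^2_{n,1}\to1$ are uniform in $\tilde{\x}$ but only pointwise in $y_{n+1}$, which your outer dominated-convergence argument already accommodates.
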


\section{Numerical experiments}
We demonstrate our proposed approach on synthetic test cases and a quantum optimal control problem. 

We compare our method with
widely used myopic approaches in BO, namely,  \texttt{EI}, \texttt{PI}, and \texttt{UCB}.
We refer to ``\texttt{EImumax}'' and ``\texttt{PImumax}'' as the \texttt{EI} and \texttt{PI} strategies,
respectively, with the target $\xi$ set as the maximum of the GP posterior mean at the
current step, that is, $\underset{\x \in \X}{\max}~ \mu(\x, t)$. We set 
the confidence parameter for \texttt{UCB}  to $\beta=2$. Additionally, we
 compare our method with ``\texttt{mumax},'' which selects
points by maximizing the GP posterior mean at the current time. We also make direct comparisons
between the myopic acquisition functions and their lookahead counterparts.
Moreover, we compare our method with a strategy that selects points uniformly at random
from $\x \sim \mcl{U}(\X)$; we call this acquisition function ``\texttt{Random}.'' 

Our
metric for comparison is the oracle value $f(\x_T, T)$ at the target time $T$. Each query to $f(\x, t)$ includes additive noise with mean zero and variance $\sigma_\epsilon^2 = 10^{-3}$. Each repetition of each experiment is provided with \emph{starting} samples that include oracle evaluations at a set of points distributed with uniform spacing in $[0, t_{\T{start}}]$ with each point sampled uniformly at random in $\X$, where $t_\T{start}<T$ is arbitrarily set for each experiment. Within each repetition the same starting samples are given to all acquisition functions. 


\subsection{Synthetic one-dimensional test cases}
\label{ss:synthetic_1d}
Our synthetic test functions take the
form $f(\x, t) = f_{\x}(\x) + f_{\x t}(\x, t)$ or  $f(\x, t) = f_{\T{ref}}(\x (t)) $ 
in order to  induce time dependence to the
maximizer of $f(\x,t)$.

We first consider quadratic one-dimensional functions $f_{\x}$
over the domain $\X = [0,1]$ with
\begin{equation} 
f_{\x}(\x) = -4 (\x - 0.5)^2.
\label{eq:1dq}
\end{equation}
The time-dependent component takes the following four forms:
\begin{equation*}
    \begin{split}
    \T{Quadratic-a}: \quad f_{\x t} = &\sin \left(\pi(\x+t)\right) + \cos \left(\pi(\x+t)\right) \\
    \T{Quadratic-b}: \quad f_{\x t} = &\sin \left(\pi(\x t)\right) + \T{cos}\left(\pi(\x t)\right) \\
    \T{Quadratic-c}: \quad f_{\x t} = &\sin \left(\pi(\x[t-3]^+)\right) +
    \T{cos}\left(\pi(\x[t-3]^+)\right)\\
    \T{Quadratic-d}: \quad f_{\x t} = &2\x\T{sin}(t) - \T{sin}^2(t),  
    \end{split}
\end{equation*}
where $[t-3]^+ \equiv \max(0, t-3)$ is specified to induce movement of
$\x_t^*$ for $t\geq 3$. 
\cref{f:1d_synthetic} shows the trajectories of $\x_t^*$, the maximizer at time $t$, for each of these test cases.
\begin{figure}
    \centering
    \includegraphics[trim={100 0 100 0}, clip, width=1\linewidth]{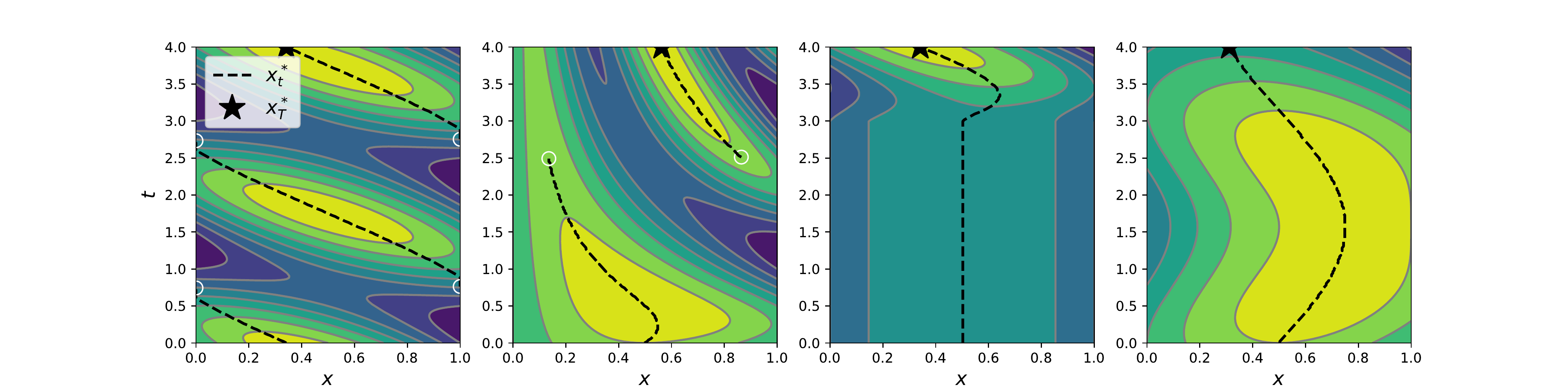}
    \caption{One-dimensional synthetic test cases. The dashed line represents the trajectory of $\x_t^*$.}
    \label{f:1d_synthetic}
\end{figure}
These test cases are deliberately designed to reflect a variety of situations, such as discontinuous change of $\x^*_t$ (Quadratic-a and Quadratic-b) and sudden dynamics (Quadratic-c). 

\cref{f:quadratic_exp} compares the points selected by our lookahead approach with various value functions and the corresponding myopic approaches. In each plot
20 repetitions 
are overlaid, with samples in $\X\times[0, 1]$ used to start all of the methods;  white
circles denote the points chosen by each method at times $t>1$. 
The lookahead BO does not try to track a moving maximizer, which results in points being more spread out in a space-filling fashion than with myopic acquisition functions. 
This approach is particularly  beneficial in handling oracles whose maximizers can go through a discontinuous change (e.g., Quadratic-a and Quadratic-b).
In the case of Quadratic-c, we demonstrate (see further details in \cref{sf:c}) that our algorithm is able
to handle oracles where $\x_t^*$ may undergo sudden dynamics ($\approx t=3.0$ in
this case). 
For Quadratic-d, where the challenges of the other
one-dimensional cases do not exist, the algorithm performs much better in
predicting $\x_T^*$.
\begin{figure}
    \centering
    \begin{subfigure}{.5\textwidth}
        \centering
        \includegraphics[trim={30 0 50 0}, clip, width=1\linewidth]{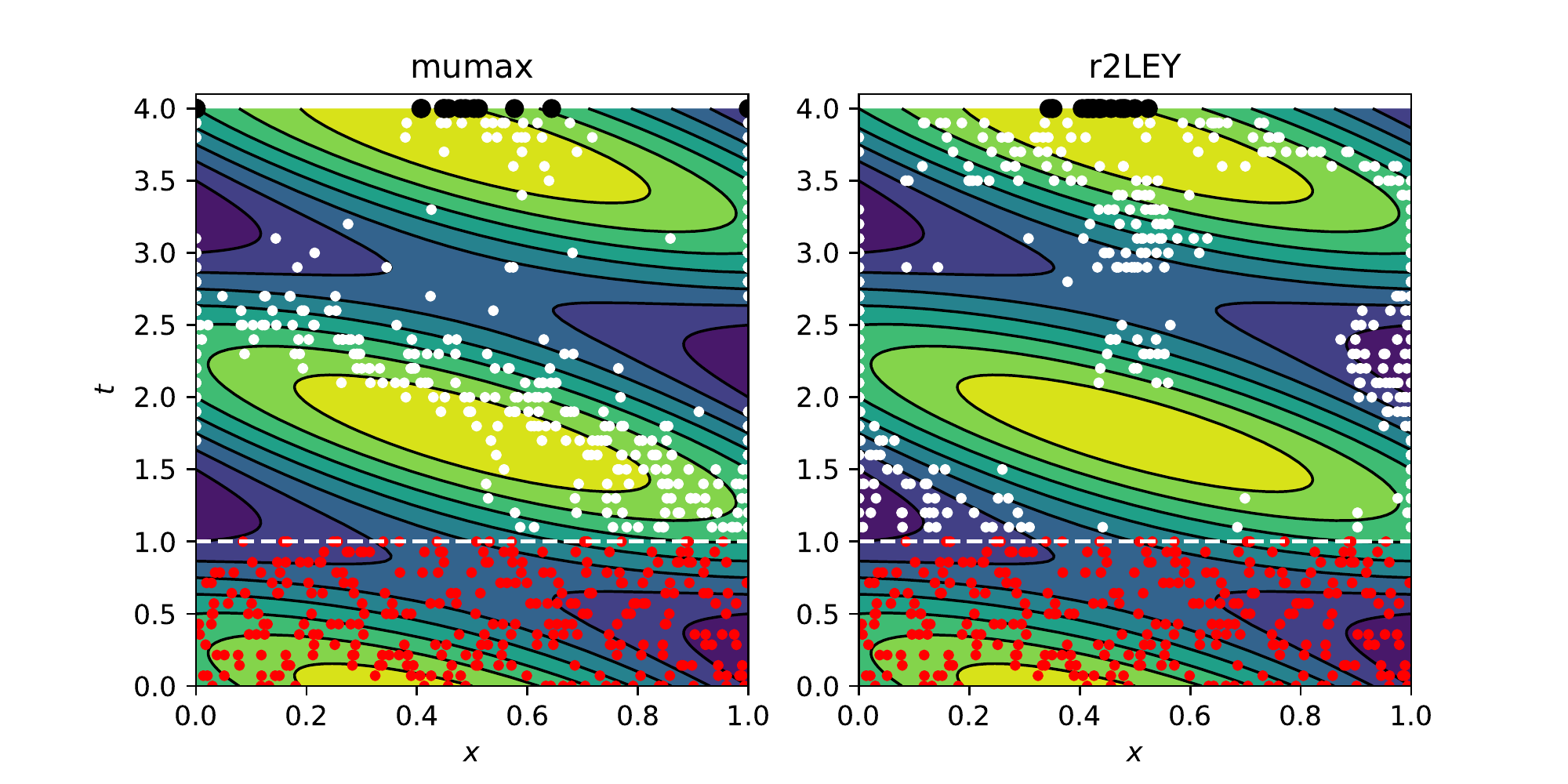}
        \caption{Quadratic-a}
    \end{subfigure}%
    \begin{subfigure}{0.5\textwidth}
        \centering
        \includegraphics[trim={30 0 50 0}, clip,width=1\linewidth]{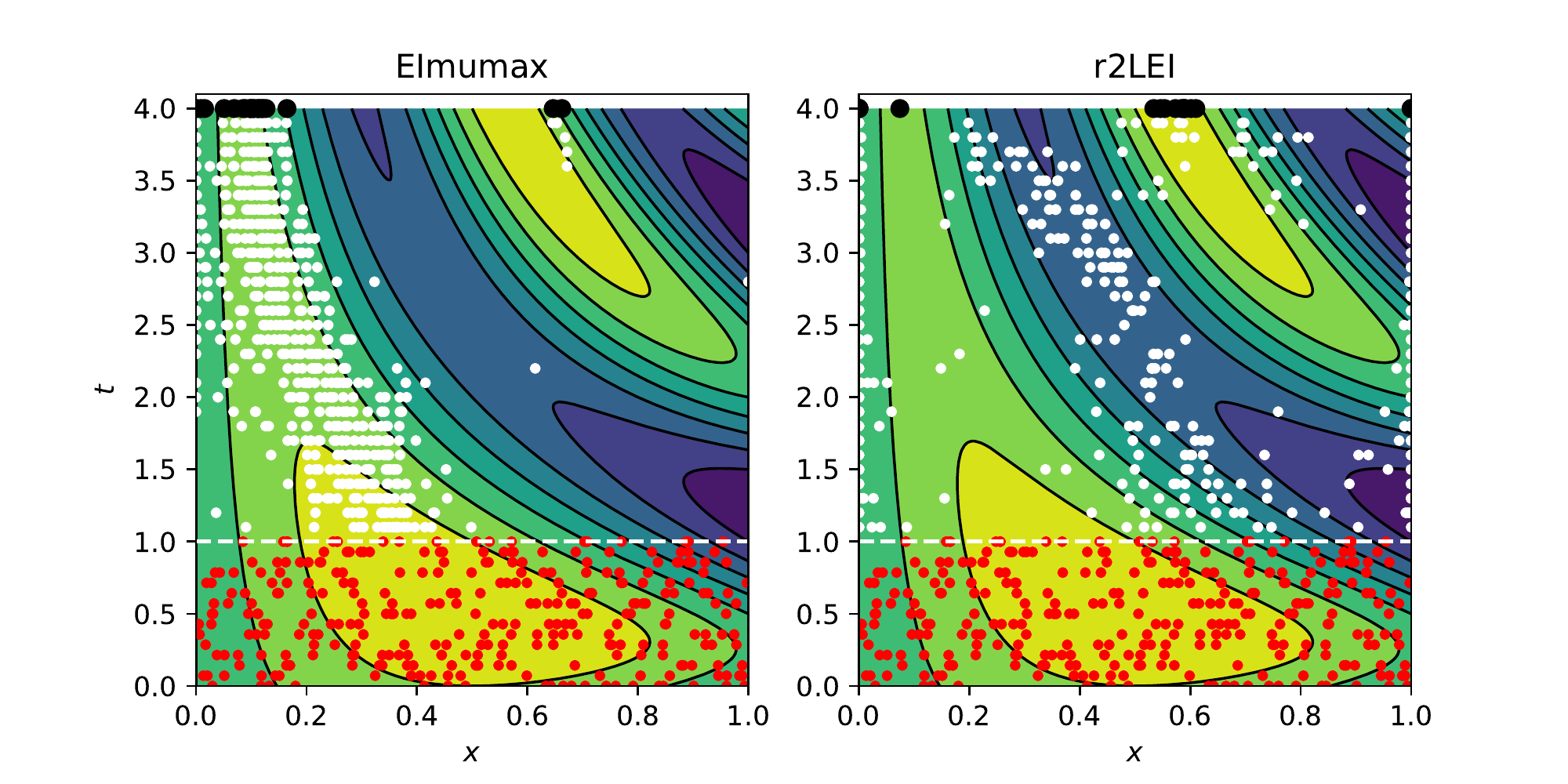}
        \caption{Quadratic-b}
    \end{subfigure} \\
    \begin{subfigure}{.5\textwidth}
        \centering
        \includegraphics[trim={30 0 50 0}, clip,width=1\linewidth]{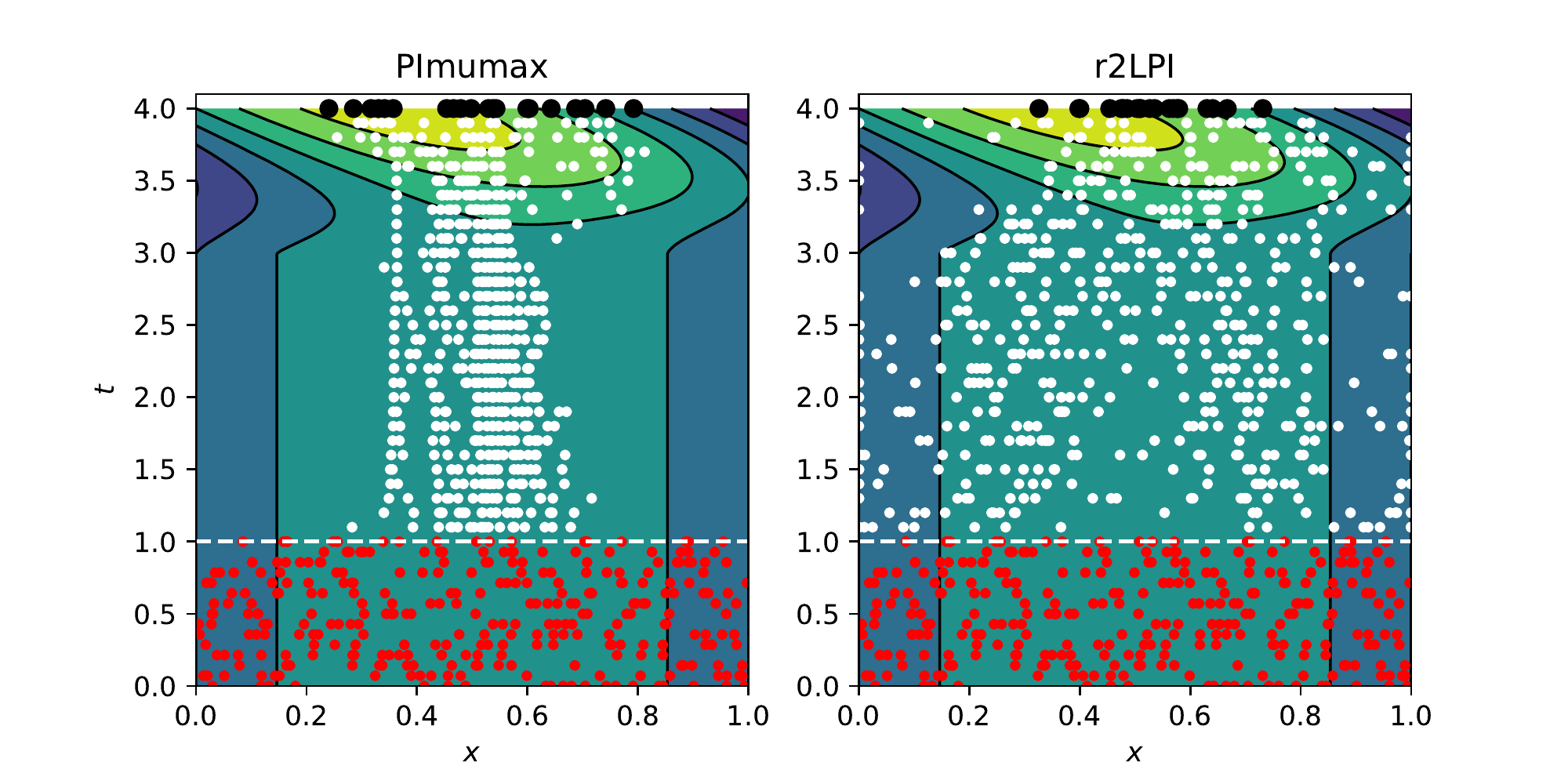}
        \caption{Quadratic-c}
    \end{subfigure}%
    \begin{subfigure}{.5\textwidth}
        \centering
        \includegraphics[trim={30 0 50 0}, clip,width=1\linewidth]{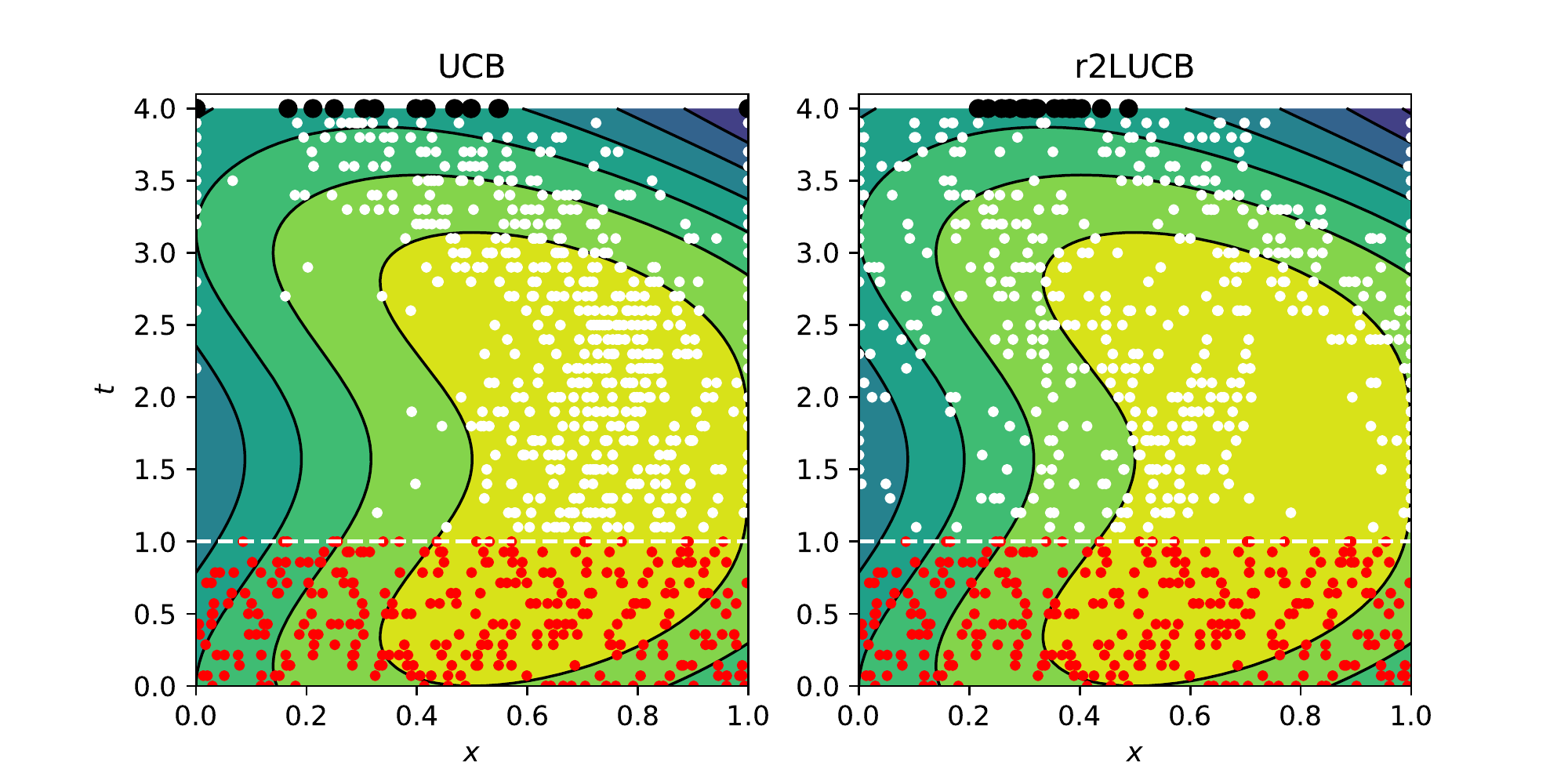}
        \caption{Quadratic-d}
    \end{subfigure}    
    \caption{Comparison of our two-step lookahead (right) and myopic (left) acquisition functions on the one-dimensional synthetic test cases. The plot shows
    an overlay of 20 repetitions of the algorithm with (shared) random starting points. The
    white circles denote the decision made at each step,
    and the contours in the background
    represents the true noise-free oracle $f(\x, t)$}
    \label{f:quadratic_exp}
\end{figure}

\cref{f:quad_myopic_v_lookahead} compares the time history of the average oracle value (over 20 replications
with randomized starting points) resulting from decisions made via the myopic and lookahead acquisition functions. We present results for only
Quadratic-a and Quadratic-c for the sake of brevity. One can see that
the lookahead acquisition functions (dashed line plots) make decisions with
lower oracle value during the early rounds. While the acquisition function per se does not
\emph{seek} to make decisions with low current oracle value, the consequence of
looking ahead at $T$ is that decisions made at current $t$ may incur a low oracle value. This is a fundamental distinction between our proposed lookahead and the myopic acquisition functions. Also, the average oracle value at target
time $T$ (denoted by the empty circle) is consistently higher than the myopic
counterparts (denoted by the square). Furthermore, the predictions by the
lookahead acquisition functions are with higher confidence, as visualized by
the error bars (one standard deviation in the predictions) shown at $T$; note that we highlight the error bars for the
lookahead acquisition functions in black.

The average oracle values at $T$
predicted by all the acquisition functions are somewhat consistent, with the
\texttt{r2LEY} being one of the most competitive, including the rest of the test cases.
For this reason, moving forward we predominately
use the \texttt{r2LEY} as the representative lookahead acquisition function.
\begin{figure}
    \centering
    \begin{subfigure}{1\textwidth}
        \centering
        \includegraphics[width=.8\linewidth]{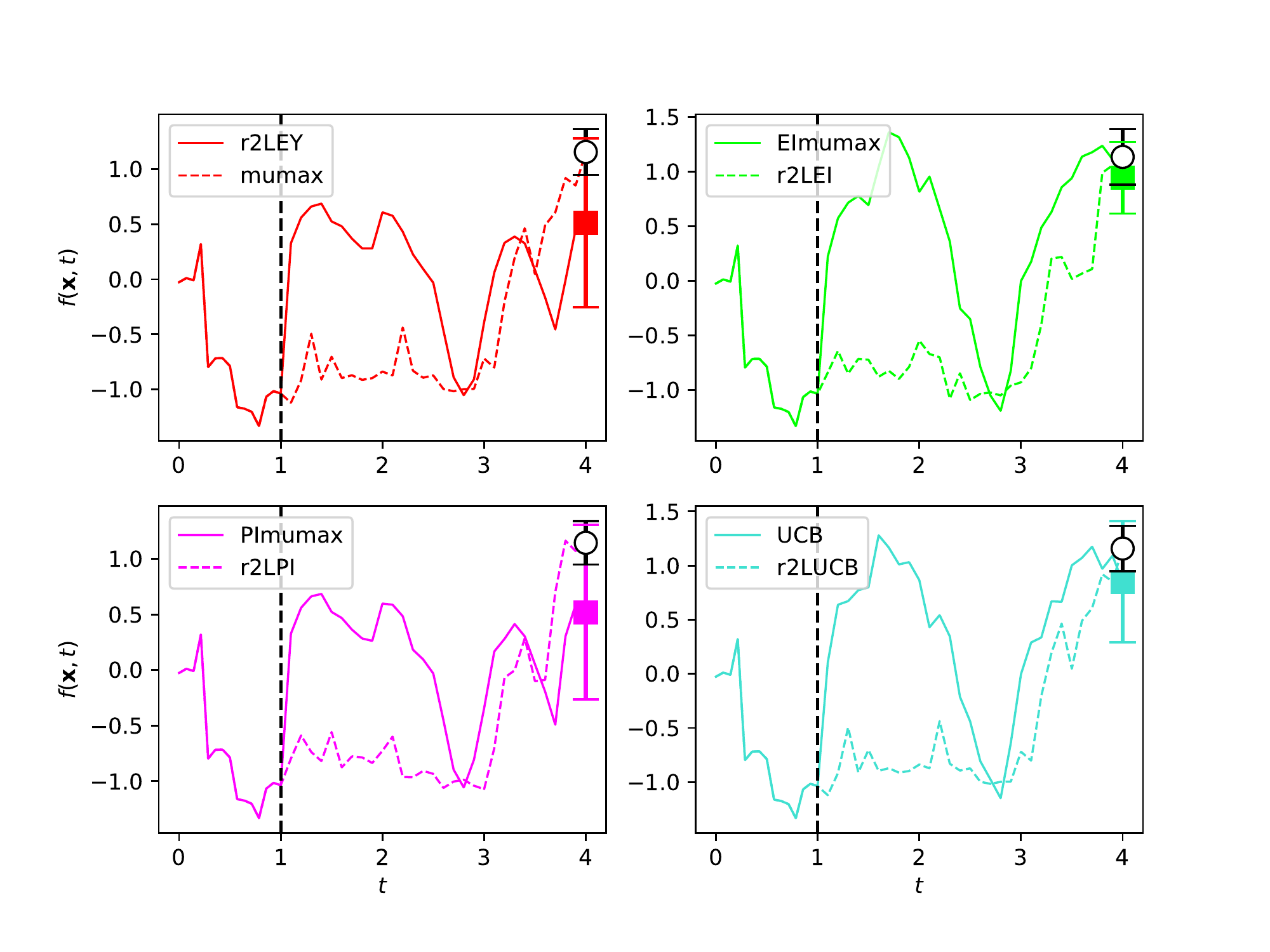}
        \caption{Quadratic-a}
        \label{sf:a}        
    \end{subfigure}\\
        \begin{subfigure}{1\textwidth}
        \centering
        \includegraphics[width=.8\linewidth]{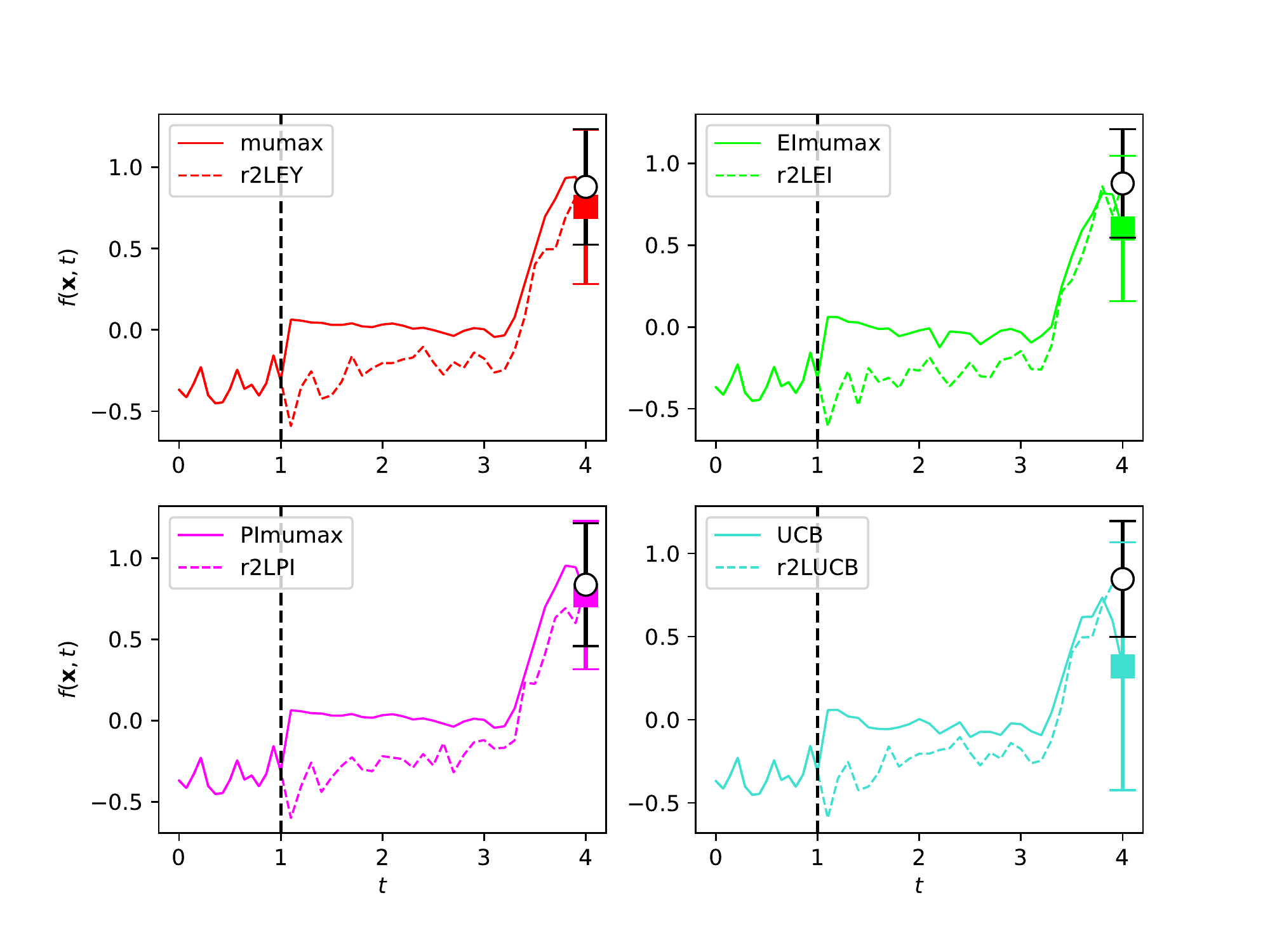}
        \caption{Quadratic-c}
        \label{sf:c}        
    \end{subfigure}
\caption{Average of 20 replications with random
starting points of the myopic (lookahead) acquisition function on 
one-dimensional test cases. The final point at $T=4$ is highlighted with square (circle) symbols and  variability at $T$ is visualized via
colored (black) error bars.
The vertical dashed line represents the time until which $n=15$ starting points
are collected and the total budget is set to $q=45$.}
\label{f:quad_myopic_v_lookahead}
\end{figure}


\subsection{Synthetic higher dimensional test cases}
\label{ss:synthetic_hd}
We now demonstrate our method on higher-dimensional test functions, up to $d=6$.
We begin by presenting a modified Griewank function
($d=2$)~\cite{simulationlib}, where we multiply the original
Griewank function $f_{\T{ref}}$ by a Gaussian weight function in order to create a
unique global maximum. Then, we modify the input $\x
\in \X=[-5,5]^2$ with a time-dependent rotation to yield the function
\begin{equation}
    f(\x,t) = f_{\T{ref}}(\x(t)) \times \T{exp}~\left( - \| \x - \x_0 \|^2/\ell\right), 
\end{equation}
where $\x_0 = [3, 0]$, $\ell \in \mbb{R}_+$  is a scaling
constant\footnote{We set
$\ell = 160$ to ensure the weighting is somewhat mild and does not
significantly change the nature of the true Griewank function.} and $\x(t) = R(\zeta_t) \times \x,$
where
\[ R(\zeta_t) = 
\begin{bmatrix}
\cos(\zeta_t) & -\sin(\zeta_t) \\
\sin(\zeta_t) & \cos(\zeta_t)
\end{bmatrix} \]
is a $2\times 2$ rotation matrix and $\zeta_t = \pi t /4$. Time snapshots of
this test function are shown in \cref{f:mod_griewank} and illustrate a 
counterclockwise rotation of the unique global maximizer with time. As before, all oracle observations include additive mean zero noise with variance $\sigma_\epsilon^2 = 10^{-3}$.

The performance of all the myopic acquisition functions is compared against \texttt{r2LEY} in \cref{f:mod_griewank_result}, where the
contour plot at $T$ is shown with the predictions from 20 repetitions of each
acquisition function (black $\times$ symbols) overlaid. Note that the total
budget is fixed at $q=90$, where $n=60$ is used in $\X \times [2,3]$ to
start the algorithm, and the target time is $T=4$.  The
lookahead acquisition function clearly predicts the global maximum at $T$ with greater
consistency than the rest do; all but two repetitions predicted $\x_T$ such that $\| \x_T - \x^*_T\| < 0.26$.

\begin{figure}[htb!]
    \centering
    \includegraphics[trim={50 50 50 50}, clip, width=.8\linewidth]{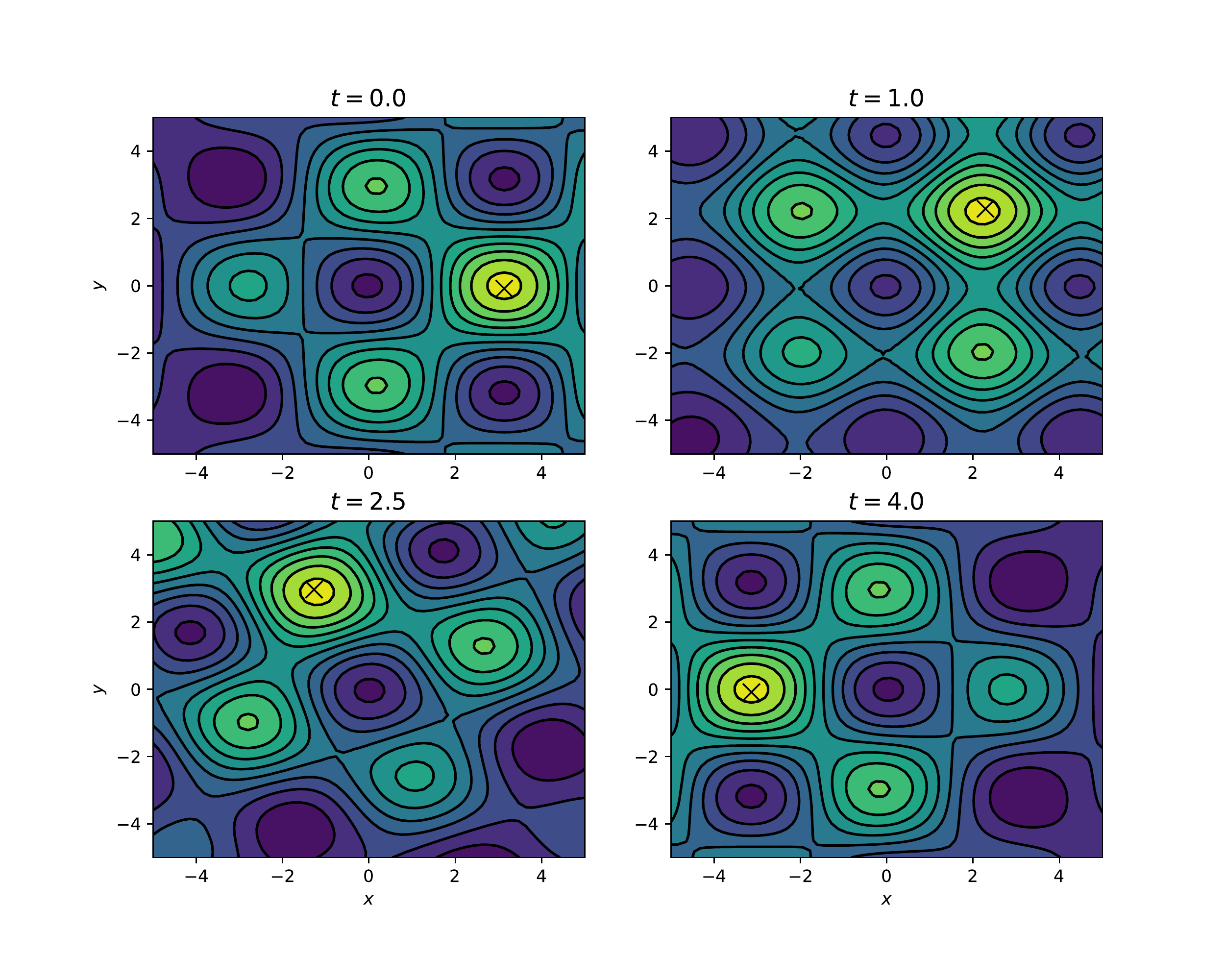}
    \caption{Time snapshots of the modified Griewank function. Notice the unique global maximum at each time and the counterclockwise rotation of the global maximizer with time.}
    \label{f:mod_griewank}
\end{figure}
\begin{figure}[htb!]
    \centering
    \includegraphics[width=.8\linewidth]{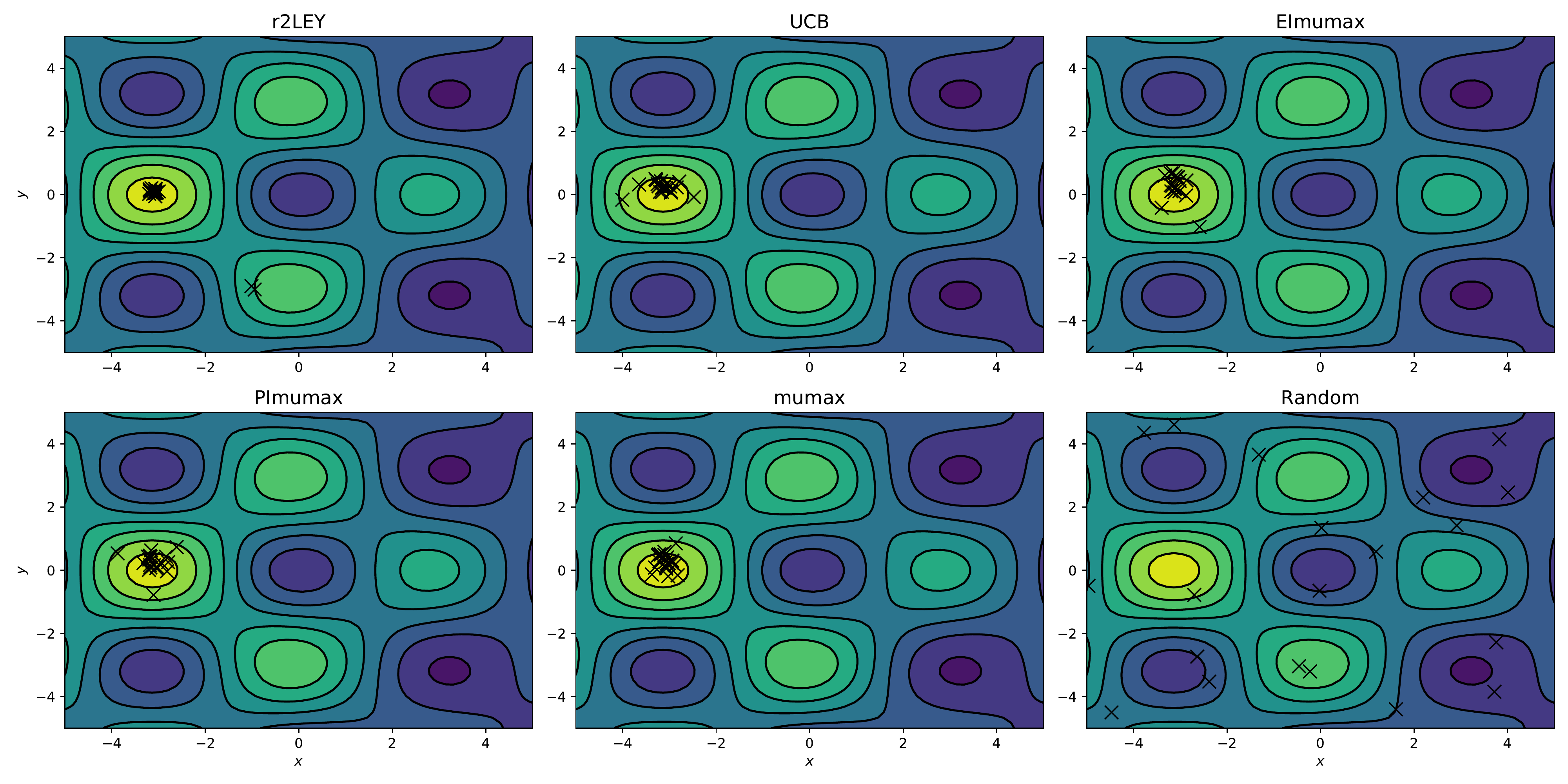}
    \caption{Predictions at $T=4.0$ on the modified Griewank test case. Symbols represent predictions at $T$ after 20 repetitions of each algorithm, and the contours represent the noise-free $f(\x, T)$. The total budget is set to $q=90$, where $n=60$ samples in $2.0\leq t \leq 3.0$ are used to start each algorithm.}
    \label{f:mod_griewank_result}
\end{figure}

We also demonstrate our method on the Hartmann-3d ($d=3$) and Hartmann-6d
($d=6$) test functions. In both cases, the Hartmann
function~\cite{simulationlib} is summed with $f_{\x t} (\x, t) = \mbf{1}^\top [2\T{sin}(t)\x
- \T{sin}^2(t)\mbf{1}]$ (where $\mbf{1}$ is a vector of ones of length $d$) to induce movement of the maximum with time.
For Hartmann-3d, the total budget is set to $q=120$ with $n=90$
samples queried in $[0, 1]^3 \times [2,3]$ to start the algorithm. For
Hartmann-6d, we set $q=150$ with $n=120$ samples queried in $[0, 1]^3 \times
[2,3]$ to start the algorithm, and our target time is $T=4$. 
\Cref{f:hartmann} shows the time histories of $f(\x_i,t)$
averaged over 20 replications. The predictions at $T$ are highlighted by adding
a marker in the plot. In both  test functions, the
lookahead acquisition function results in a higher oracle value $f(\x_T, T)$ at the
target time $T$ compared with other myopic acquisition functions. 

\begin{figure}[htb!]
    \centering
    \begin{subfigure}{0.5\textwidth}
        \centering
        \includegraphics[width=1\linewidth]{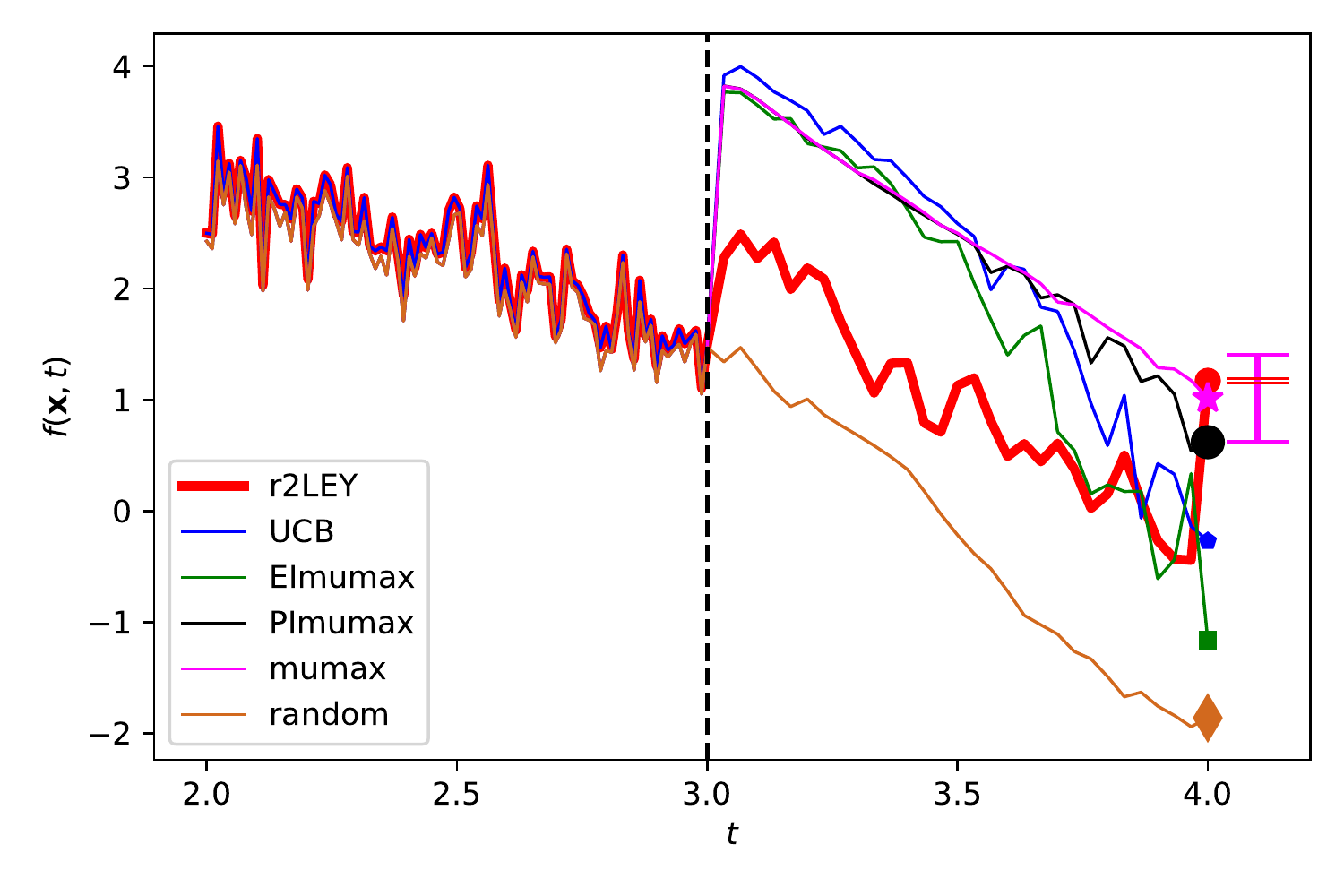}
    \end{subfigure}%
    \begin{subfigure}{0.5\textwidth}
        \centering
        \includegraphics[width=1\linewidth]{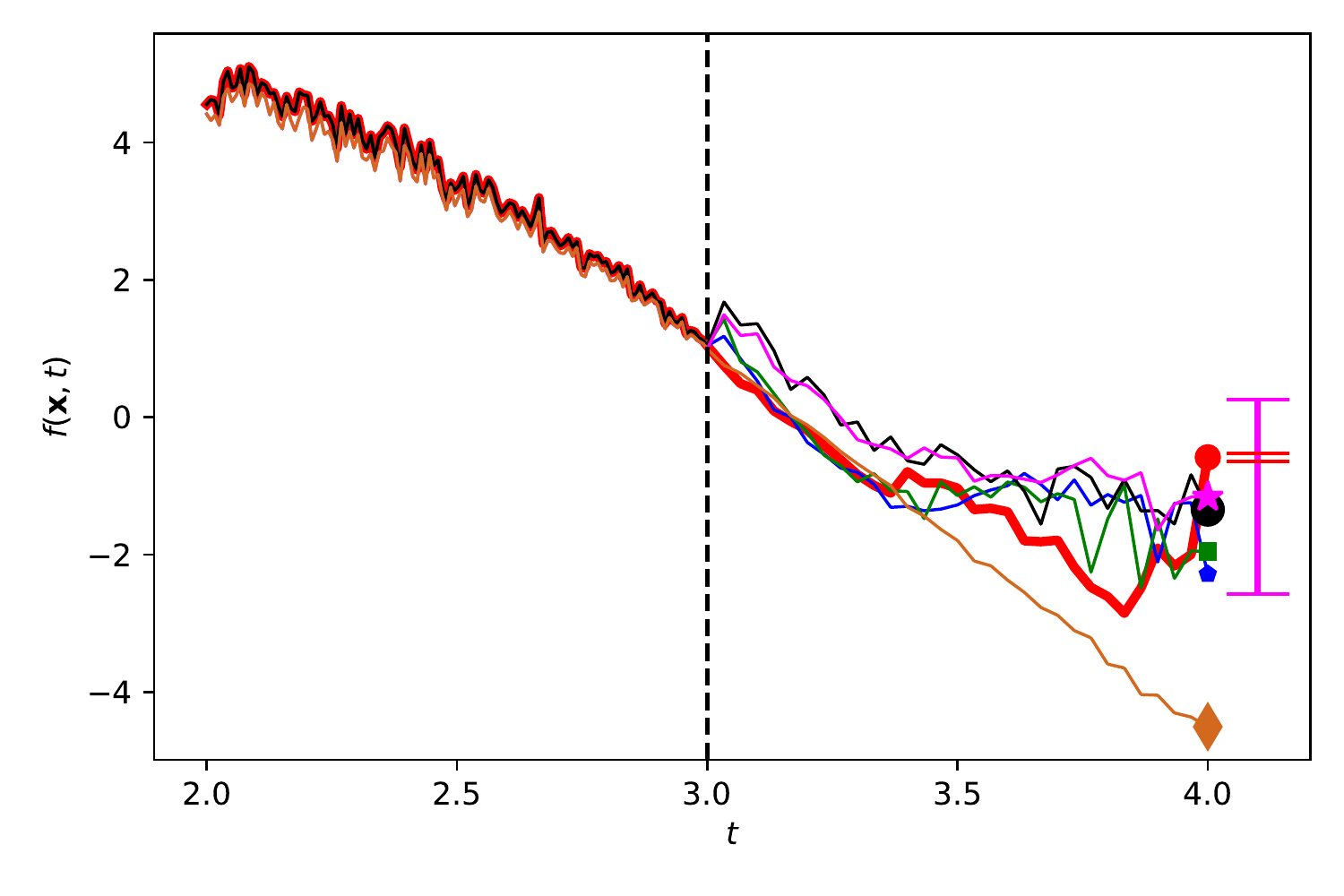}
    \end{subfigure}
    \caption{Hartmann-3d ($d=3$) and Hartmann-6d ($d=6$) test functions. Average of 20 repetitions. Left of vertical line corresponds to starting samples. The error bars for the oracle value at $T$ for the best two acquisition functions (\texttt{r2LEY} and \texttt{mumax}) shown are slightly shifted in time for better visibility. The \texttt{r2LEY} shows comparatively less variability.}
    \label{f:hartmann}
\end{figure}

\subsection{Quantum optimal control}
\label{ss:qoc}
We next compare methods for optimizing time-dependent oracles on a challenging real-world problem.
Optimally shaping electromagnetic fields to control quantum systems is a widespread
application~\cite{brif2010control}; similar problems appear when controlling molecular transformations relevant to chemical,
biological, and materials science applications. Quantum optimal control (QOC) is
a method to shape electromagnetic fields to steer a quantum system toward a
desired control target; see
\cref{f:quantum_chemistry} where the target is a specific outcome from a chemical reaction. 

\begin{figure}[tb!]
    \centering
    \includegraphics[width=0.5\linewidth]{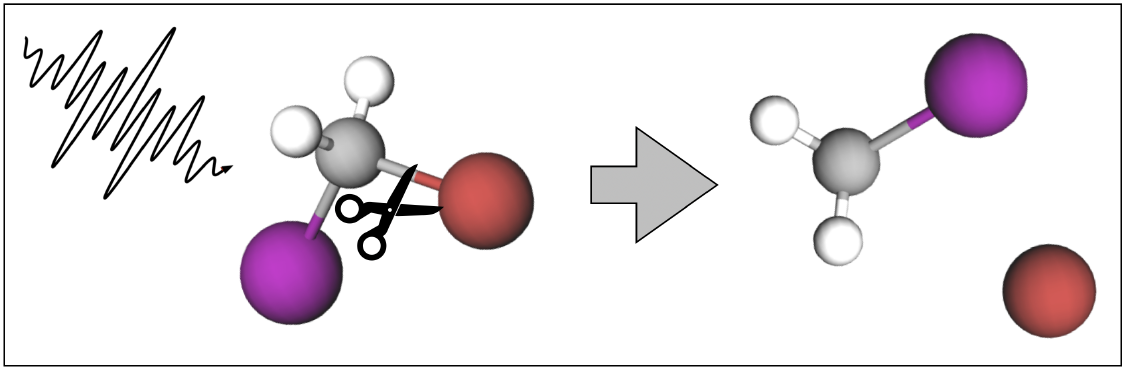}
    \caption{Illustration of QOC to control chemical reactions. A femtosecond
    laser controlled by QOC is used to provide control over the selective
    dissociation of molecules. Figure source:~\cite{magann2020digital}}
    \label{f:quantum_chemistry}
\end{figure}

The problem we consider is based on a diatomic molecule \ce{HF}, whose vibrational energy mode is modeled as a nonrotating Morse oscillator on a 1D grid that varies with time and is illuminated with a laser
field to induce dissociation. The electromagnetic field is parameterized by a set of coefficients $\x \in \X \subset \mbb{R}^5$ that are the control parameters; more specifically, $\x$ contains the frequency components of the electromagnetic field. The digital quantum simulation of the Morse 
oscillator $f(\x, t)$ takes as input the control parameter $\x$ and a \emph{terminal} time $t
\in [0, T]$---which is not to be confused with our target time $T$---whose
 output is the observable control objective that needs to be minimized with respect to $\x$. 
The cost of each query of the simulation increases monotonically with the terminal time $t$, and
we are interested in finding the optimal control at a specific target time $T$. Since searching for the optimal control at $T$ is expensive,  we seek to predict 
 the optimal control at $T$ with cheaper evaluations of $f(\x, t)$ for $t<T$.

In this regard, we use BO to find the control parameters $\x_T^*$ that maximize the negative value of $f(\x,T),
\forall \x \in \X$. We begin by collecting observations of their system
$\D_n= \{(\x_i, t_i), \hat{y}_i \},~i=1,\ldots,n$, which we use to fit
a GP that emulates $f$ in $\X \times {\mcl{T}}$. 
Then, we sequentially
make observations $\{(\x_i, t_i), \hat{y}_i \},~i=n+1,\ldots,q$, where 
each $\x_i$ is selected by maximizing our proposed two-step lookahead acquisition function. 
\Cref{f:quantum_problem} provides a schematic of
this problem. To conform to our problem setting, we allow evaluations only at increasing values of $t$. 
See~\cite{magann2020digital} for more information about this problem. 

We set the total budget to $q=180$, where $n=150$ samples are queried over $4000
\leq t \leq 5000$ to start the algorithm and $T=6000$. The time histories of
each acquisition function for the QOC problem are shown in
\cref{f:quantum_results}, where we include results from \texttt{r2LEY} and \texttt{r2LPI} among the lookahead approaches. 
The best-known  maximum $f(\x^*,
T)=0.0$ at $T=6000$, which is shown in the figure as the black square. In terms of the average performance, the
lookahead \texttt{r2LEY} outperforms all other methods, with the \texttt{mumax}
being a  close competitor, as shown in the top-left of \Cref{f:quantum_results}. We visualize the variability in the predictions at $T$ via boxplots in the top right of \Cref{f:quantum_results}, where we observe that the \texttt{r2LEY} has the highest median oracle value to interquartile distance ratio. This shows that although \texttt{r2LEY} does not result in the best median oracle value, it provides the most confident prediction, while still being competitive with \texttt{UCB} and \texttt{mumax}. 
Even though the lookahead method \texttt{r2LPI} does not necessarily outperform all the other myopic methods, it shows better performance---in terms of both average and median oracle value---compared with its myopic counterpart \texttt{PImumax}. This behavior is consistent with  what was observed with the one-dimensional test cases presented in \cref{ss:synthetic_1d}.

\begin{figure}[tb!]
    \centering
    \includegraphics[width=1\linewidth]{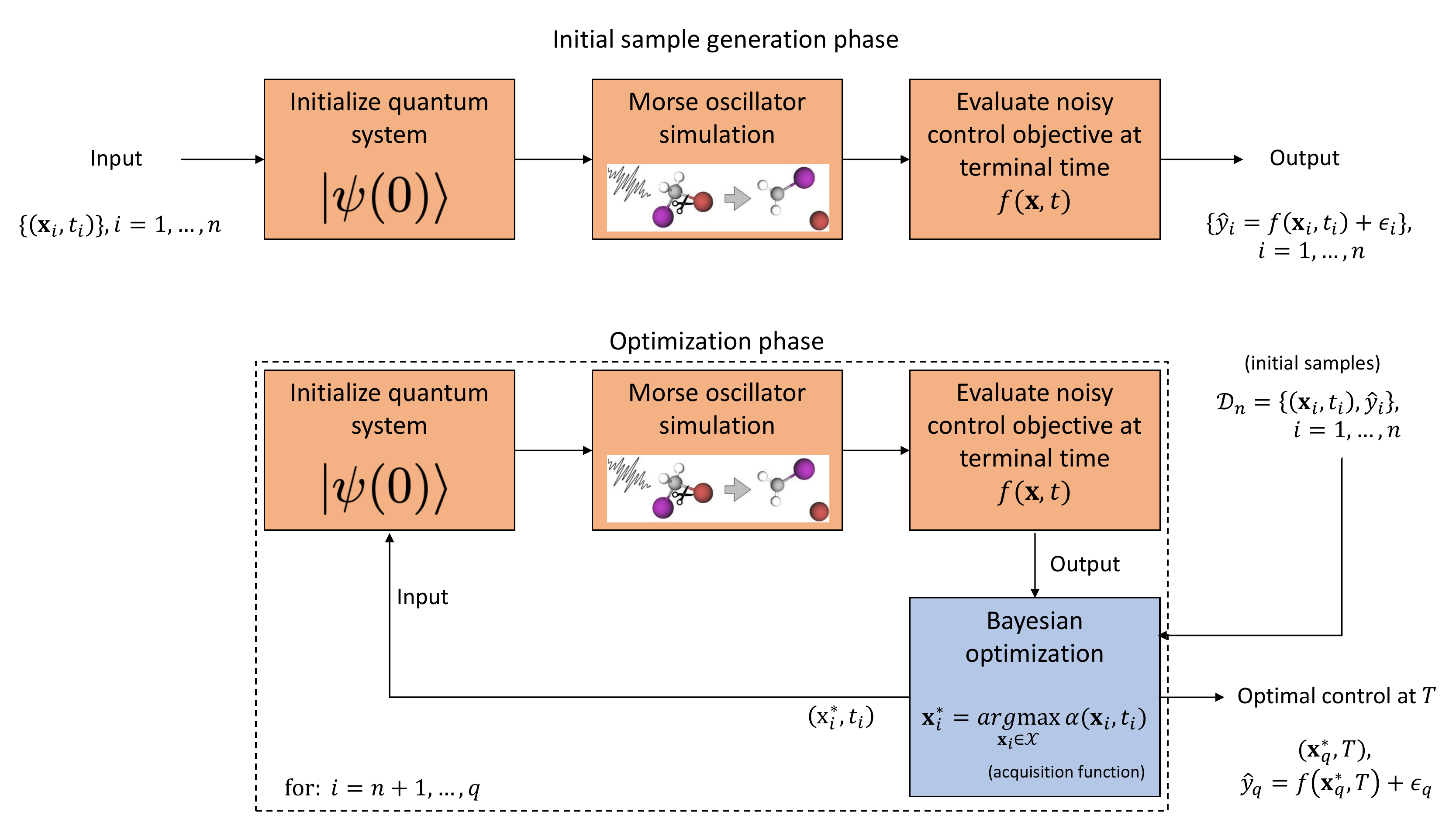}
    \caption{Overview of the quantum optimal control problem recreated from \cite{magann2020digital}.
    The quantum system dynamics, driven by a field
    parameterized by $\x$, are simulated for the time range $[0, t]$ (where $t$
    is the terminal time), after which the control objective function
    $f(\x,t)$ is calculated. A recent framework \cite{magann2020digital} is proposed
    for efficiently evaluating the control
    objective function 
    by using a future quantum computer.
    During an initial sample generation phase, the control objective function
    is evaluated $n$
    times to generate initial samples $\D_n$ that are used by the GP to
    learn $f(\x, t),~ \x\in \X, t\in [0, t]$. Then, in the optimization
    phase, the trained GP 
    is used in the BO framework to find the optimal
    control at target time $T$. The inputs at $(\x_i,t_i),~i=,n+1,\ldots,q$ are
    chosen at \emph{optimal} locations as guided by the BO acquisition function
    optimization.}
    \label{f:quantum_problem}
\end{figure}
\begin{figure}
    \centering
    \begin{subfigure}[t]{.5\textwidth}
        \centering
        \includegraphics[width=1\linewidth]{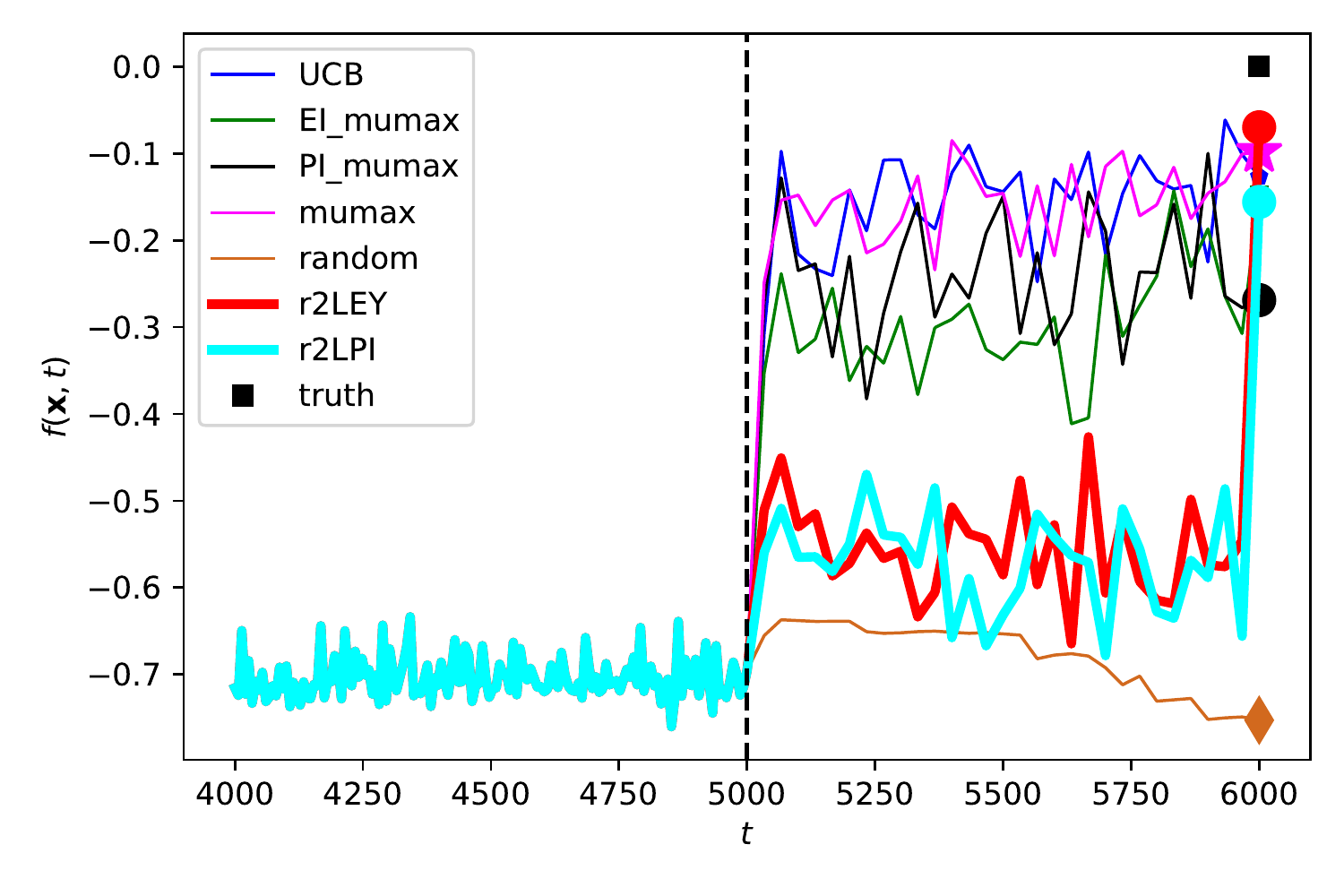}
        \caption{Average $f(\x,t)$ over 20 repetitions with randomized starting points. Left of vertical line denotes starting points.}
    \end{subfigure}%
        \begin{subfigure}[t]{.5\textwidth}
        \centering
        \includegraphics[width=1\linewidth]{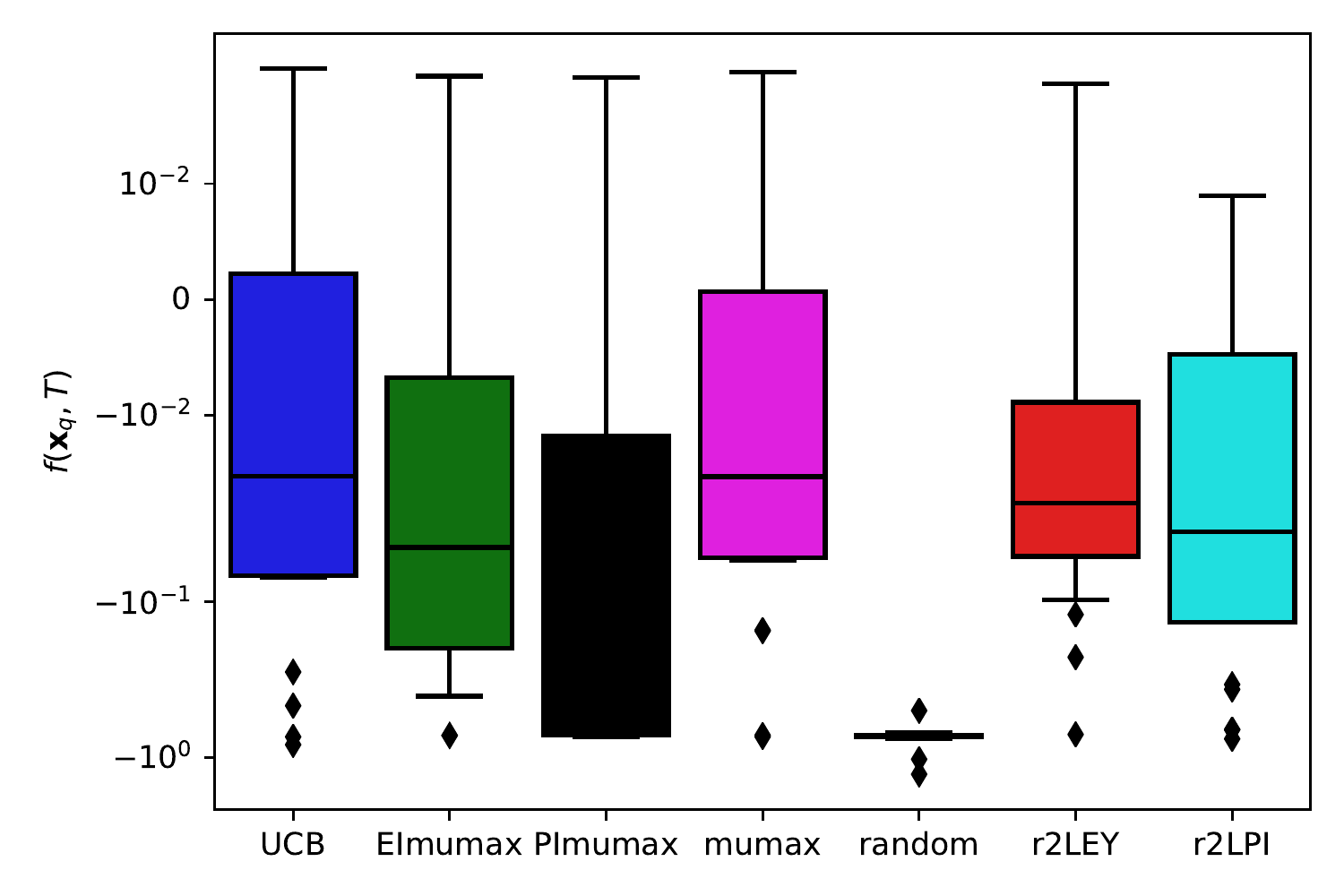}
        \caption{Variability in the oracle value at $T$, $f(\x, T)$. }
    \end{subfigure}\\

    \caption{Demonstration on the quantum optimal control problem.}
    \label{f:quantum_results}
\end{figure}

\subsection{Implementation details}
We implemented our recursive lookahead Bayesian optimization framework
in PyTorch~\cite{pytorch_bib} using the GP and BO libraries GPyTorch~\cite{gardner2018gpytorch} and
BoTorch, respectively~\cite{balandat2019botorch}. For
finite-horizon time-dependent BO with limited budget, the lookahead
acquisition functions outperform their myopic counterparts in securing the best
oracle value at the horizon. This improved accuracy comes at the cost of additional
computational overhead in terms of optimizing the acquisition functions. Our
Monte Carlo approximation to the acquisition function and its gradients, as presented in Line 5 of 
Algorithm~\ref{a:r2LBO}, lends
itself well to be used with a gradient-based optimizer.
For further computational improvements, however, we
also adopt the \emph{one-shot} optimization of the lookahead acquisition
function as implemented in BoTorch. The approach formulates line 4 of Algorithm~\ref{a:r2LBO} as a deterministic optimization problem over the joint $\x$ and $\x'$, where $\x' \defeq \{ \x^j\}_{j=1}^N$ are the candidate maximizers or \emph{fantasy} points. While this increases the dimensionality of the optimization problem from $d$ to $N+d$, we observe consistent results (with our Monte Carlo approach) with $N$ set to as small as 32, resulting in significant computational speedup. Moreover, the one-shot approach solves a deterministic optimization problem as opposed to a stochastic optimization problem with the Monte Carlo approach.
We include
both the Monte Carlo and one-shot approaches in our software and recommend the
latter as the dimensionality $d$ of the problem increases; all the results presented in this work are based on the one-shot approach. 
For the $d=5$
QOC problem, the acquisition function optimization takes approximately
30 seconds of wall-clock time when run in serial. Because many of the time-dependent oracles we are targeting require significantly more computational cost and time, we consider the per-iterate cost of our
approach to be trivial. Furthermore, the PyTorch framework allows
for leveraging advanced computation hardware such as graphics processing units, which may allow for further computational efficiency.

\section{Conclusions and future work}
Bayesian optimization with a lookahead acquisition
function outperforms well-known myopic acquisition functions in solving the finite-budget, finite-horizon, time-dependent maximization of an
expensive stochastic oracle.
Specifically, for such problems, the lookahead approach
makes probabilistically sound decisions that give better average-case
returns at the target horizon, while considering 
strategies that
lead up to the target horizon. In this regard, the lookahead acquisition
functions take advantage of the epistemic uncertainty---explained by the GP---in
the predictions about the oracle. In this work we introduce a generalized
framework that offers lookahead extensions to widely known myopic acquisition
functions. We demonstrate our method on 
illustrative 
synthetic test functions (that involve discontinuities in the
oracle maximizer) as well as a quantum optimal control application problem.
Because of the lack of analytically closed-form expressions for our lookahead
acquisition functions, their construction (and optimization) incurs more
computational overhead than do the myopic acquisition functions. Such an
overhead is insignificant, however, when optimizing expensive stochastic oracles that take, for example, hours of wall-clock time per evaluation. 
Overall, we demonstrate that our
proposed acquisition function results in more confident and accurate
predictions of the global maxima at the target horizon compared with Bayesian optimization using myopic acquisition functions.

The primary challenge with handling time-dependent optimization problems with
BO is that the effect of the dynamics of the oracle has to be learned by a GP model. When the
distance to the horizon is large, it is particularly challenging to train a
\emph{global} GP that emulates the oracle in $\X \times \mcl{T}$. A natural
alternative is to work with \emph{local} GPs within 
local
regions that
are dynamically updated. This is one avenue for future work that we
are considering.

A related problem is multifidelity
BO (see, e.g., \cite{song2019general, kandasamy2017multi, takeno2019multi}) in which one seeks to solve
$    \max_{\x \in \X }~f(\x)$ using observations of $\hat{f}(\x, s)$. The \emph{fidelity} is given by the
parameter $s\in[0,1]$, with $f(\x)\defeq \hat{f}(\x, 1)$ corresponding to the
highest fidelity. Typically, there is a cost function $c(s)\in \mbb{R}$, which is monotonically increasing in the fidelity
level $s$. 
By placing a GP prior on the function $\hat{f}(\x, s)$ directly, 
one can
sequentially optimize acquisition functions such as
\[ \max_{(\x,s) \in \X\times [0,1]}~\f{\alpha(\x, s)}{c(s)}.\]

Our problem is equivalent to the multifidelity BO problem with $s$
replaced by $t\in[0,T]$ with the following key differences:
\begin{enumerate}
    \item In our case, $\hat{f}(\x,T)$ can never be observed when $t< T$. More generally, any
      set $\{t_i \}, i=1, \ldots, n$ has a unique ordering $t_1 < \ldots < t_n$;
      and while we are at any $t_i$, evaluation of $f(\x, t \neq t_i)$ is not possible.
    \item We make only one observation $f(\x, t)$ for each $t\in[0,T]$.
    \item The cost of each evaluation $c(s) = c$ is a constant. However, similar to the general multifidelity framework, we
    are still interested in maximizing $f(\x, T)$.
\end{enumerate}
Another area of future work is defining a time-dependent cost
function that can be used within the multifidelity framework to also select the
time schedule of observations automatically; this approach  is of specific interest in
handling continuous-time systems.

\appendix
\section{Gradient computation with PyTorch} 
\label{app:grad_nd}
We show two examples of nondifferentiable functions in \Cref{f:nondiffgrad}:
(i) $f(\x) = \sin(|\x|)$ and (ii) $f(\x) = \max(\x, \x^2)$. In (i) the
nondifferentiability occurs at $\x=0$, and in (ii) the nondifferentiability
occurs at both $\x=0$ and $\x=1.0$, where the subgradient returned by PyTorch is shown in the figure.

\begin{figure}
    \centering
    \begin{subfigure}{.5\textwidth}
        \includegraphics[width=1\linewidth]{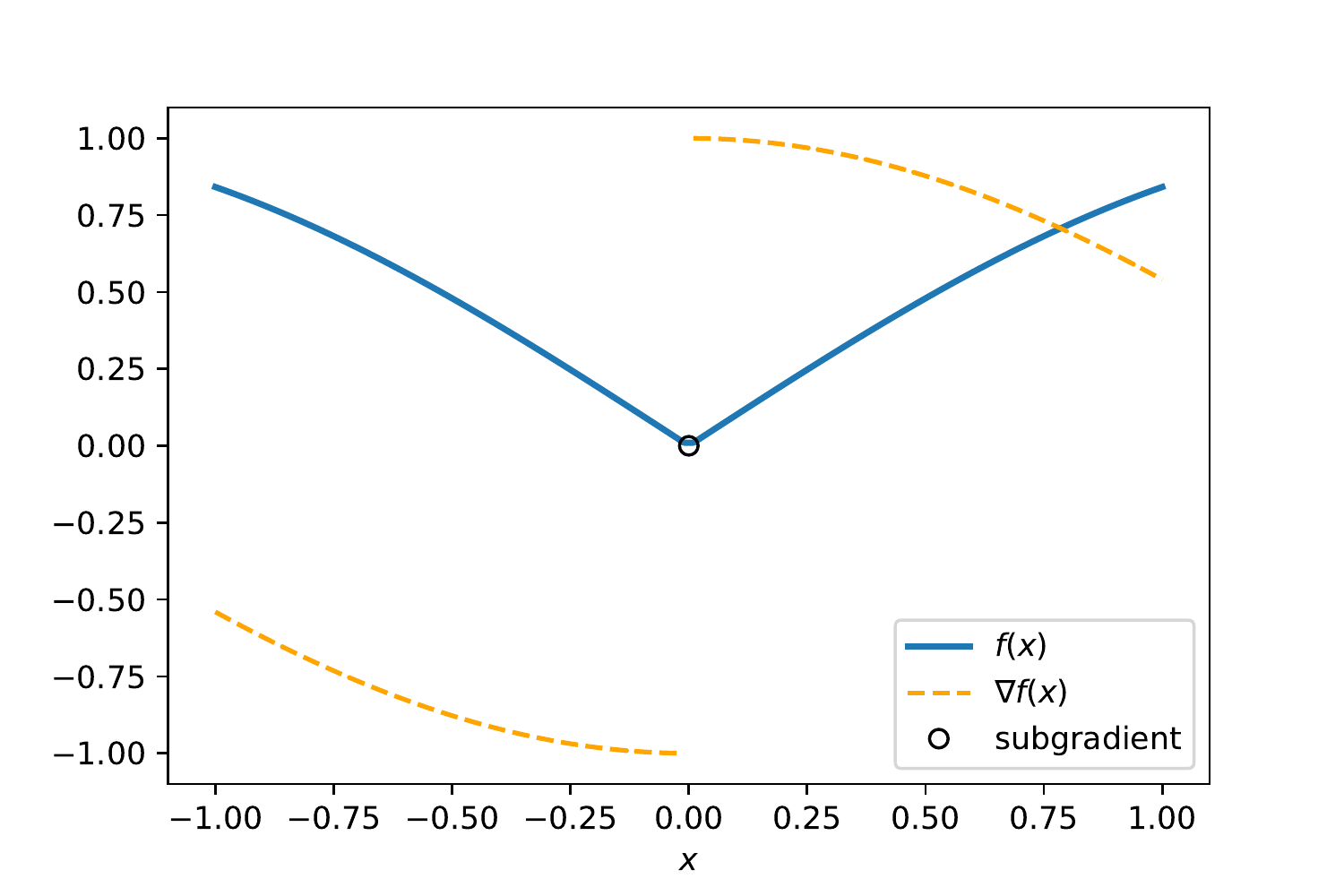}
        \caption{$f(\x) = \sin(|\x|)$}
    \end{subfigure}%
        \begin{subfigure}{.5\textwidth}
        \includegraphics[width=1\linewidth]{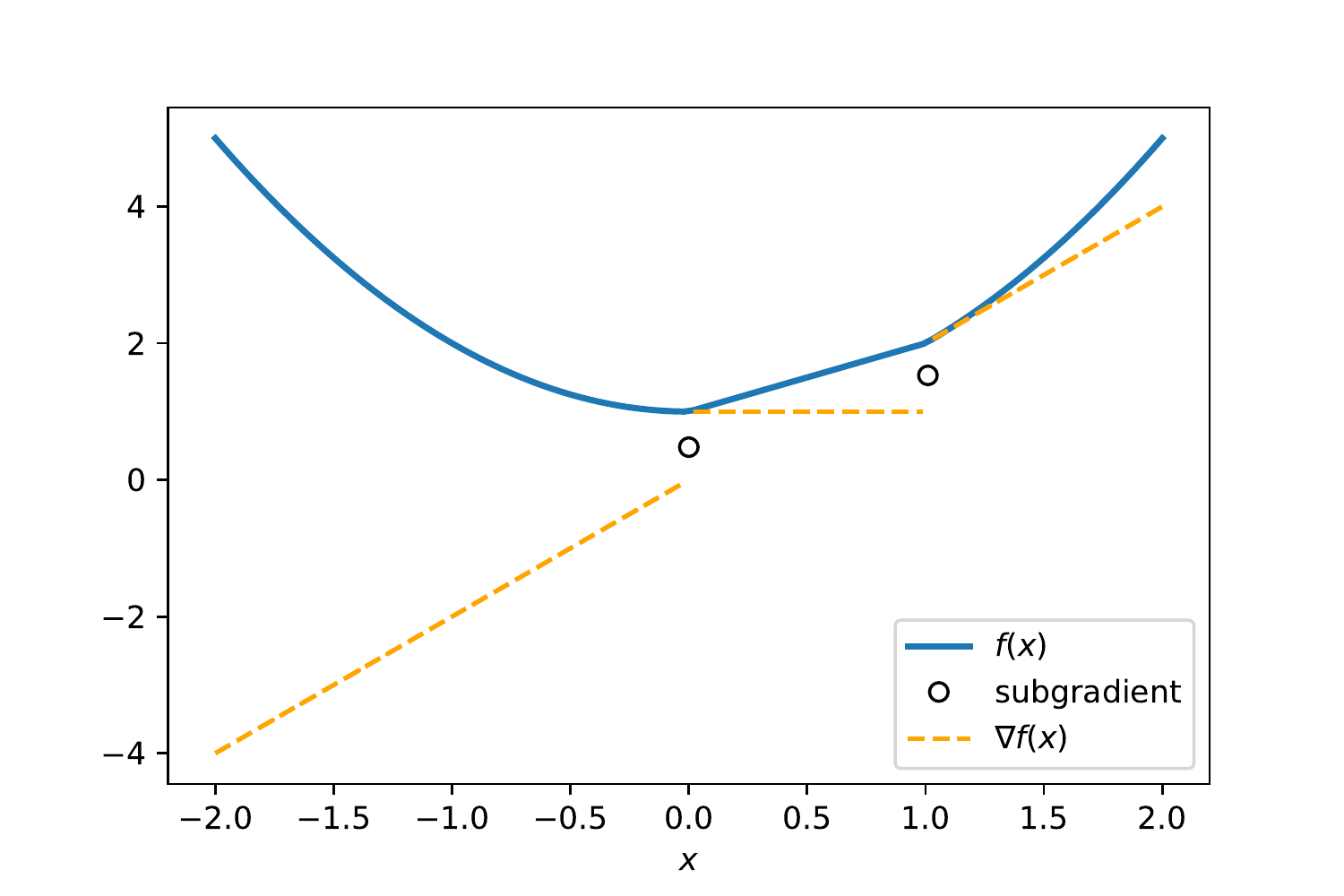}
        \caption{$f(\x) = \max(\x, \x^2)$}
    \end{subfigure}
    
    \caption{Examples of nondifferentiable functions and their (sub)gradients. At points of nondifferentiability, we indicate the subgradient returned by PyTorch.}
    \label{f:nondiffgrad}
\end{figure}


    



\section*{Acknowledgments}
This material was based upon work supported by the U.S. Department of Energy, Office of Science,
Advanced Scientific Computing Research, Accelerated Research for Quantum Computing and Quantum Algorithm Teams Programs under
Contract DE-AC02-06CH11357.
We thank Mohan Sarovar and Alicia Magann for
numerical simulation code for the Morse oscillator quantum optimal control
problem.


\bibliographystyle{siamplain}
\bibliography{references}

\end{document}


\maketitle

\section{A detailed example}

Here we include some equations and theorem-like environments to show
how these are labeled in a supplement and can be referenced from the
main text.
Consider the following equation:
\begin{equation}
  \label{eq:suppa}
  a^2 + b^2 = c^2.
\end{equation}
You can also reference equations such as \cref{eq:matrices,eq:bb} 
from the main article in this supplement.

\lipsum[100-101]

\begin{theorem}
  An example theorem.
\end{theorem}

\lipsum[102]
 
\begin{lemma}
  An example lemma.
\end{lemma}

\lipsum[103-105]

Here is an example citation: \cite{KoMa14}.

\section[Proof of Thm]{Proof of \cref{thm:bigthm}}
\label{sec:proof}

\lipsum[106-112]

\section{Additional experimental results}
\Cref{tab:foo} shows additional
supporting evidence. 

\begin{table}[htbp]
{\footnotesize
  \caption{Example table.}  \label{tab:foo}
\begin{center}
  \begin{tabular}{|c|c|c|} \hline
   Species & \bf Mean & \bf Std.~Dev. \\ \hline
    1 & 3.4 & 1.2 \\
    2 & 5.4 & 0.6 \\ \hline
  \end{tabular}
\end{center}
}
\end{table}

\bibliographystyle{siamplain}
\bibliography{references}